\newcommand{\Fig}[1]{Fig.~\ref{fig:#1}}
\newcommand{\Eq}[1]{Eq.~(\ref{eq:#1})}
\newtheorem{proposition}{Proposition}
\newcommand{\Lc}{\mathcal{L}}
\newcommand{\Dc}{\mathcal{D}}
\newcommand{\Bc}{\mathcal{B}}
\newcommand{\Cc}{\mathcal{C}}
\newcommand{\Ac}{\mathcal{A}}
\newcommand{\Rs}{\mathbb{R}^2}
\newcommand{\Ed}{\mathbb{E}}
\newcommand{\Pd}{\mathbb{P}}
\newcommand{\sinr}{\mathrm{SINR}}
\newcommand{\dr}{\mathrm{d}}
\newcommand{\Rcone}{u(\omega,\gamma)}
\newcommand{\Upper}{u}
\newcommand{\Lower}{l}
\newcommand{\rmax}{r_\text{max}}
\newcommand{\los}{\texttt{l}}
\newcommand{\nlos}{\texttt{n}}
\newtheorem{theorem}{Theorem}
\newtheorem{corollary}{Corollary}
\DeclarePairedDelimiter\floor{\lfloor}{\rfloor}
\DeclareMathOperator*{\argmax}{\arg\!\max}
\begin{document}
%
% paper title
% can use linebreaks \\ within to get better formatting as desired
% Do not put math or special symbols in the title.
\title{A Stochastic Model for UAV Networks Positioned Above Demand Hotspots in Urban Environments}
%
%
% author names and IEEE memberships
% note positions of commas and nonbreaking spaces ( ~ ) LaTeX will not break
% a structure at a ~ so this keeps an author's name from being broken across
% two lines.
% use \thanks{} to gain access to the first footnote area
% a separate \thanks must be used for each paragraph as LaTeX2e's \thanks
% was not built to handle multiple paragraphs
%

\author{Boris Galkin,
        Jacek~Kibi\l{}da,
        and~Luiz~A. DaSilva
%\thanks{This material is based upon works supported by the Science Foundation
%Ireland under Grants No. 10/IN.1/I3007 and 14/US/I3110. B. Galkin, J. Kibi\l{}da, and L. DaSilva are with %CONNECT, Trinity College Dublin, Ireland, email:  \{galkinb,kibildj,dasilval\}@tcd.ie.}% <-this % stops a space
}

\affil{CONNECT, Trinity College Dublin, Ireland \\
\textit{E-mail: \{galkinb,kibildj,dasilval\}@tcd.ie}}

\maketitle

% As a general rule, do not put math, special symbols or citations
% in the abstract or keywords.
\begin{abstract}
Wireless access points on unmanned aerial vehicles (UAVs) are being considered for mobile service provisioning in commercial networks. To be able to efficiently use these devices in cellular networks it is necessary to first have a qualitative and quantitative understanding of how their design parameters reflect on the service quality experienced by the end user. In this paper we use stochastic geometry to characterise the behaviour of a network of UAV access points that intelligently position themselves above user hotspots, and we evaluate the performance of such a network against cases where the UAVs are positioned in a rectangular grid or according to heuristic positioning algorithms.
\end{abstract}

% Note that keywords are not normally used for peerreview papers.
\begin{IEEEkeywords}
UAV networks, coverage probability, poisson point process, stochastic geometry
\end{IEEEkeywords}

% For peer review papers, you can put extra information on the cover
% page as needed:
% \ifCLASSOPTIONpeerreview
% \begin{center} \bfseries EDICS Category: 3-BBND \end{center}
% \fi
%
% For peerreview papers, this IEEEtran command inserts a page break and
% creates the second title. It will be ignored for other modes.
\IEEEpeerreviewmaketitle

\section{Introduction}
Radio infrastructure mounted on unmanned aerial vehicles (UAVs) has become recognised by the wireless community as a core opportunity for next-generation communication networks \cite{Zeng_20162}. UAVs introduce several improvements over conventional infrastructure, such as the ability to intelligently adjust their positions in real-time and the ability to benefit from unobstructed wireless channels in the air. These advantages have made UAVs attractive candidates for a range of applications, such as wireless sensor networks \cite{Heimfarth_2014}, public safety networks \cite{Merwaday_2015}, and as flying small cells covering demand hotspots in densely populated areas \cite{BorYaliniz_20162}.

The introduction of UAVs also presents new challenges to network deployment, as the uprecedented flexibility of the new infrastructure requires more insight into how the new network variables affect the achievable performance. As the UAVs can move in three dimensions on-demand they can pursue a variety of deployment strategies with respect to the locations of end users, each other, or a combination thereof. The selection of the appropriate deployment requires an understanding of the performance impact of each deployment strategy in a given situation. In addition to this, the operating UAV height will affect the overall network performance and needs to be selected with care. Based on current regulations from the Federal Aviation Administration (FAA) \cite{FAA_2016} and the European Aviation Safety Agency (EASA) \cite{EASA_2017} as well as proposed airspace management schemes \cite{SESAR_2017} we expect that UAVs serving urban hotspots will take the form of small, lightweight (below 25kg) devices operating at heights at or below 200m. Given this height range the UAV network may be operating above a built-up urban area or below building heights in so-called urban canyons, which will significantly affect the radio environment of the UAV network. In our previous work \cite{Galkin_2017} we used stochastic geometry to provide analytical expressions for the coverage performance of a randomly distributed UAV network in such an urban environment, where buildings affect the wireless channel. In this paper we extend our work by considering the performance of the UAV network when the UAVs are positioned intelligently above hotspots of end users, thus covering the demand hotspots. Our contribution can be stated as follows:

\begin{enumerate} 
\item We provide an analytical expression for the coverage probability and average spectral efficiency for a typical user served by a network of UAVs which are positioned intelligently above the centers of user hotspots. Our model takes into account parameters such as building density, user hotspot radius and UAV antenna beamwidth, and can represent different wireless fast-fading behaviours through generalised Nakagami-m fading. 
\item  Using our model we demonstrate that there exists an optimum UAV height for a given user hotspot radius, and that larger user hotspots require UAVs to increase their altitudes to maximise performance. We also demonstrate how the optimum UAV height is almost unaffected by varying the density of user hotspots and UAVs. Our numerical results demonstrate how the presence of interference in the UAV network imposes a strict limitation on the range of heights the UAV network can operate at.
\item We compare the performance of the UAV network deployed above user hotspots against UAV networks which are deployed randomly with respect to users, UAV networks which are deployed according to a heuristic optimisation algorithm, as well as UAV networks deployed in a rectangular grid. This comparison allows us to demonstrate that for higher UAV altitudes and larger UAV densities the UAV network benefits more from UAVs positioning themselves with respect to each other rather than with respect to user hotspots, due to the effect of mitigating interference. This leads us to conclude that there exists a certain UAV threshold density above which the UAV network should not position itself around user hotposts but should instead spread out the UAVs as much as possible to reduce coverage overlap and interference.
\end{enumerate}

\section{Related Work}
The wireless community has published a number of works on the topic of deploying UAV-mounted access points to serve terrestrial users. These works typically set up an optimisation problem where a UAV parameter such as location in 3D space is optimised, subject to a given objective function. In \cite{Zeng_2016} the authors consider a single fixed-wing UAV acting as a relay between two ground terminals, and the throughput is maximised by optimising the trajectory and the transmit power of the UAV across discrete timeslots. The authors demonstrate how the trajectory can be optimised using convex optimisation methods when the transmit powers are fixed, and propose an iterative algorithm for the case where both the trajectory and the transmit powers are variable. The same authors optimise the trajectory to maximise energy-efficiency in \cite{Zeng_2017}, and they optimise transmit power for a UAV with a circular orbiting pattern in \cite{Zhang_20172}. In \cite{BorYaliniz_2016} the authors consider the problem of placing a UAV in 3D space to maximise the number of users that are covered, subject to a quality of service (QoS) constraint. The authors describe the scenario as a mixed integer non-linear problem which they solve given different environmental parameters. In \cite{Fotouhi_2016} the authors consider a UAV which dynamically repositions itself with respect to randomly moving users to maximise the achieveable spectral efficiency at every discrete timeslot. The authors quantify the performance gain with respect to a fixed UAV case, given different bandwidth allocation policies. In \cite{Chen_2017} the authors consider a scenario where a UAV relay iteratively searches an urban environment for locations where it can establish a LOS connection to the end user and BS, meeting the channel rate requirements. The authors present a converging algorithm which significantly outperforms direct BS-user communication. In \cite{Ouyang_2018} the authors optimise the trajectory of a laser-powered UAV access point to maximise achieveable throughput while meeting energy constraints. The authors propose a flight pattern which involves the UAV orbiting around the laser transmitter to harvest energy before hovering above the end user to maximise throughput. In \cite{Gruber_2016} the author considers a scenario where each UAV in a network of UAVs iteratively adjust its location to maximise its spectral efficiency, subject to interference from the remaining UAVs. The author demonstrates that the UAVs are able to converge on locally optimum 2D coordinates but are not able to converge on an optimum height above ground. In our previous work \cite{Galkin_2016} we demonstrated how a generic K-means clustering algorithm can be used to position UAVs around user locations in an interference-free environment, and that the resulting UAV network outperforms fixed terrestrial networks in terms of received signal strength at the user locations. K-means clustering was also proposed for UAV placement optimisation in \cite{Mozaffari_20162}.

We report an emerging trend which can be observed in the works cited above. With the notable exception of \cite{Gruber_2016}, existing state of the art on UAV network optimisation tends to ignore the effects of interference, instead focusing on scenarios where individual UAVs are operating in isolation. Without interference the wireless links are limited by the geometry of the environment, and therefore the network performance is generally optimised through minimising the distance between the UAV and the receiver, as this minimises the pathloss, increases the LOS probability and enables the network to reduce transmit power. These optimisation strategies may not apply to a scenario where multiple UAVs are operating concurrently and creating interference for each other. In the presence of interference, decreasing the distances between transmitters and receivers may also have the result of decreasing the distances between interferers and receivers, potentially causing a net decrease in channel performance. 

 Stochastic geometry is an alternative method for modelling the spatial relationships in a UAV network, capturing the effect of interference on network performance and giving us new insight into the performance trade-offs of UAV networks. Stochastic geometry has seen widespread use for analysing the behaviour of terrestrial networks; notable examples include \cite{AndrewsBaccelliGanti_2011} in which the authors derive expressions for the coverage probability and mean rate given uniformly and randomly distributed base stations, and \cite{Saha_2017} in which the authors consider multi-tier network deployments with small cells placed around user hotspots. UAV networks have also been analysed using stochastic geometry in prior art. In \cite{Ravi_2016} and \cite{Chetlur_2017} the authors derive the coverage probability for a stochastic UAV network under guaranteed LOS conditions for a fading-free and Nakagami-m fading channel. The authors describe a fixed number of UAVs operating within a fixed area at a certain height above ground and demonstrate how an increase in height results in a decrease in the coverage probability. Additionally in \cite{Chetlur_2017} they demonstrate how larger values of fading parameter $m$ reduce the variance of the random signal-to-interference ratio (SIR) experienced by the user. Stochastic geometry is applied by the authors of \cite{Zhang_2017} to optimise UAV density in a radio spectrum sharing scenario under guaranteed LOS conditions. In \cite{Hayajneh_20162} the authors evaluate the performance of a network of UAVs acting alongside a terrestrial BS network in an emergency outage scenario. The authors of \cite{MahdiAzari_20172} and \cite{MahdiAzari_20173} use stochastic geometry to evaluate the performance of a terrestrial BS network that is serving terrestrial users and UAV users simultaneously. In our previous works \cite{Galkin_2017} and \cite{Galkin_2018} we model the coverage probability of the user access and wireless backhaul links, respectively, of a UAV network operating in an urban environment in the presence of LOS-blocking buildings, assuming independent distributions of the transmitters and receivers. This paper extends our work in \cite{Galkin_2017} by extending our stochastic geometry model to account for UAVs that are positioned above known user hotspot locations. This new model allows us to investigate the effect that intelligent UAV placement has on network performance, compared against UAV placement that is blind to user locations or the locations of other, interfering UAVs.

\section{System Model}

\begin{figure}[t!]
\centering
	\includegraphics[width=.40\textwidth]{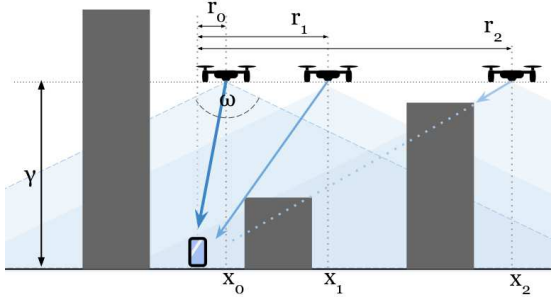}
%	\vspace{-3mm}
	\caption{
	Side view showing UAVs in an urban environment at a height $\gamma$, with 2D coordinates $x_0,x_1$, $x_2$ and antenna beamwidth $\omega$. The user is serviced by the UAV with the strongest signal, while the remaining UAVs generate LOS and NLOS interference.
	\vspace{-5mm}
	}
	\label{fig:drone_network}
\end{figure}

We consider a scenario where a number of users congregate in an area of interest, creating several user clusters which generate data demand. These clusters are referred to as demand hotspots, which the UAV network attempts to serve. The set of hotspots in the area of interest is modelled as a Matern Cluster Process (MCP) \cite{Saha_2017}. The number of user hotspots in the area of interest is random, with average hotspot density $\lambda_p$. The location of each hotspot is random in $\Rs$ and is independent of the location of other hotspots; the set of hotspot centers is denoted as $\Phi_p  = \{y_0,y_1,...\} \subset \Rs$ where $y_i$ corresponds to the $i$th hotspot center. The users belonging to a hotspot $i$ are positioned in a circle of radius $\rmax$ centered on the hotspot center $y_i$. Users are randomly and uniformly positioned inside this circle.

We perform our analysis for a reference user, which is selected as the random user of a random hotspot. As the set of hotspots is stationary and translation-invariant we assume the reference user to be located in the origin, and we denote $y_0 \in \Phi_p$ as the location of the reference user hotspot center. 

There exists a network of UAVs operating at a fixed height $\gamma$ above ground. We denote the set of UAVs as $\Phi_u = \{x_0,x_1,...\} \subset \Rs$ where $x_i$ corresponds to the projected coordinates of the $i$th UAV onto $\Rs$. The reference user will be served by one of the UAVs in $\Phi_u$; the remaining UAVs will generate interference for the reference user.

The serving UAV is the one from which the reference user observes the highest received power. We denote its location as 
\begin{equation}
x_\star = \argmax_{x_i\in\Phi_u}\{\bar{S}_{i}\}
\label{eq:Ass}
\end{equation}

\noindent
where $\bar{S}_{i}$ denotes the long-term average\footnote{Since cell-level association acts on the order of seconds, we assume that any fast fading effects (like the multipath fading) will be averaged out.} power received from the $i$th UAV at $x_i$.

\subsection{Propagation Model}

We assume that the propagating signal is subject to distance-dependent pathloss and small-scale power fading, following the channel models used in \cite{Saha_2017}, \cite{Chetlur_2017} and \cite{MahdiAzari_20172}. The instantaneous received signal strength from a UAV $i$ at $x_i$ at the reference user is:
\begin{equation}
S_i = \eta_{r_i} H_{T_i} l(r_i,\gamma,\alpha_{T_i})
\end{equation}

\noindent
where $r_i = ||x_i||$ is the horizontal distance to the UAV, $\eta_{r_i}$ is the antenna gain in the direction of the reference user, $H_{T_i}$ is the random multipath fading component experienced by the UAV $i$, $l(r_i,\gamma,\alpha_{T_i})=(r_i^2+\gamma^2)^{-\alpha_{T_i}/2}$ is the pathloss function, $\alpha_{T_i}$ is the pathloss exponent, $T_i \in\{\los,\nlos\}$ is a random variable denoting whether UAV $i$ has a LOS or NLOS channel type to the user. Note that the channel type will determine the value of the pathloss exponent $\alpha_{T_i}$ as well as the small scale fading $H_{T_i}$. We assume Nakagami-m small scale fading, as this is a generalised fading model which can represent a variety of radio environments \cite{Chetlur_2017}. Under Nakagami-m fading the component $H_{T_i}$ is a gamma-distributed random variable, with fading parameter $m_{T_i}$.

\subsubsection{Transmit Power and Antenna Gain} We assume UAVs have identical transmit power $\mu$ and a directional antenna with beamwidth $\omega$. The main beam illuminates the area directly beneath the UAV. We assume a uniform and rotationally symmetric beam pattern; using the approximations (2-26) and (2-49) in \cite{Balanis_2005} and assuming perfect antenna radiation efficiency the gain  $\eta_{r_i}$ in the direction of the reference user from UAV $i$ can be expressed as
\begin{equation}
\eta_{r_i} = \begin{cases} \mu 16\pi/(\omega^2), \quad \text{if}\quad r_i\leq \Rcone \\
0, \quad \text{if}\quad r_i>\Rcone
\label{eq:antgain}
\end{cases}
\end{equation}

\noindent
where $\Rcone=\tan(\omega/2)\gamma$.

\subsubsection{Channel Model}
 In an urban environment the LOS is blocked by buildings between the transmitter and receiver. To model a distribution of buildings we adopt the model in \cite{ITUR_2012}, which defines an urban environment as a collection of buildings arranged in a square grid. There are $\beta$ buildings per square kilometer, the fraction of area occupied by buildings to the total area is $\delta$, and each building has a height which is a Rayleigh-distributed random variable with scale parameter $\kappa$. The probability of a UAV $i$ having LOS channel to the reference user $(T_i = \los)$ is given in \cite{ITUR_2012} as 
\vspace{-1mm}
\begin{equation}
\Pd_{\los}(r_i) = \prod\limits_{n=0}^{\max(0,d-1)}\left(1-\exp\left(-\frac{\left(\gamma - \frac{(n+1/2)\gamma}{d}\right)^2}{2\kappa^2}\right)\right)
\label{eq:LOS}
\vspace{-1mm}
\end{equation}
\noindent
where $d = \floor*{r_i\sqrt{\beta\delta}}$. It follows that the NLOS probability $\Pd_{\nlos}(r_i) = 1-\Pd_{\los}(r_i)$. 

\subsection{SINR Model}

The SINR for the reference user can be described as:
\begin{equation}
\sinr = S_{\star}/(I+\sigma^2),
\end{equation} 
where $S_{\star} = \eta_{r_\star} H_{T_\star} l(r_\star,\gamma,\alpha_{T_\star})$ is the signal from the serving UAV, $I$ denotes the aggregate signal power received from all UAVs in $\Phi_u$ other than the serving UAV and $\sigma^2$ denotes the noise power. We assume that the UAV network uses a frequency band which is orthogonal to that used by the terrestrial networks, and therefore the user does not experience interference from any terrestrial devices in the downlink.

\subsection{Coverage Probability and Spectral Efficiency}
The reference user is said to be successfully served by the UAV network if it establishes a downlink channel with a SINR above some minimum threshold $\theta$. We refer to the probability of the SINR exceeding this threshold as the coverage probability 
\begin{equation}
\Pd_c = \Pd(\sinr > \theta).
\end{equation}

Using the Shannon capacity bound the spectral efficiency (SE) of the UAV network can be expressed in terms of the SINR as 

\begin{equation}
%\text{SE} = \Ed[\log_2(1+\sinr)] \equiv \Ed[\Pd(\sinr>2^\theta - 1)].
\text{SE} = \Ed[\log_2(1+\sinr)].
\end{equation}

\section{Mathematical Analysis}

In this section we provide an analytical expression for the coverage probability of the UAV network for the case when the UAVs are deployed above user hotspots. In the first subsection we provide the basic expressions for the set of UAVs and their distances to the user, in the second subsection we derive the association probabilities of the reference user, in the third subsection we derive the Laplace transforms of the aggregate interference and noise experienced by the reference user and in the final subsection we bring our derivations together to provide our main result.

\subsection{UAV Placement and Distance Distribution}

We assume that the UAV network knows the exact number and location of user hotspots in the area of interest. The UAV network serves the hotspots by positioning exactly one UAV above the center of each hotspot. As a result, both the density and coordinates of the UAVs match those of the hotspots exactly, with $\Phi_u \equiv \Phi_p$ and $\lambda_u \equiv \lambda_p$. It follows that the reference user at the origin will have a UAV above its associated hotspot center; we denote this UAV with the index $0$ and its location as $x_0\in \Phi_u$. We partition the set $\Phi_u$ into the sets $\{x_0\}$ and $\Phi^! = \Phi_u\setminus \{x_0\}$, containing the reference user hotspot UAV and all the remaining UAVs, respectively. Note that, following Slivnyak's theorem \cite{Haenggi_2013}[Theorem 8.10], the set of UAVs $\Phi^!$ remains a PPP with intensity $\lambda_u$.

The horizontal distance between the reference user and the UAV at its hotspot center is denoted as random variable $R_0$, given an MCP distribution of user hotspots the probability density function (pdf) of $R_0$ is given in \cite{Saha_2017} as

\begin{equation}
f_{R_0}(r) = 
\begin{cases}
2r/\rmax^2 \quad \text{if} \quad 0\leq r \leq \rmax \\
0 \quad \text{otherwise}
\label{eq:hotspotpdf}
\end{cases}
\end{equation}

The distance to the hotspot center UAV at $x_0$ is independent of the channel type the UAV has to the reference user, therefore the joint pdf of $R_0$ and $T_0$ can be written as 

\begin{equation}
 f_{R_{0j}}(r) = \Pd_{j}(r)f_{R_0}(r), \quad j \in \{\los,\nlos\}
\end{equation}

The user may be served by one of the UAVs in $\Phi^!$ if it provides the strongest signal. The UAVs in the set $\Phi^!$ have a mixture of LOS and NLOS channels to the reference user, and it is more convenient for subsequent derivations to consider the behaviour of LOS and NLOS UAVs in $\Phi^!$ separately. We partition the UAVs in $\Phi^!$ into two disjoint sets $\Phi_{\los} = \{x_i \in \Phi^! : T_i = \los\}$ and $\Phi_{\nlos} = \{x_i \in \Phi^! : T_i = \nlos\}$. In effect, we carry out a thinning procedure on the PPP $\Phi^!$ \cite{Haenggi_2013}[Theorem 2.36] using $\Pd_{\los}(r)$ as a thinning function, with $\Phi_{\los}$ and $\Phi_{\nlos}$ being the resulting thinned processes. These two sets are PPP with intensity functions $\lambda_{\los}(x) = \Pd_{\los}(||x||)\lambda_u$ and $\lambda_{\nlos}(x) = \Pd_{\nlos}(||x||)\lambda_u$, respectively. The pdf of the distance to the closest UAV in $\Phi_{\los}$ and $\Phi_{\nlos}$ is defined in \cite{Haenggi_2013} as 

\begin{equation}
f_{R_{j}}(r) = 2\pi \lambda_j(r) r \exp\Big(-2\pi\int\limits_{0}^{r}\lambda_j(r) r\dr r\Big) \quad j \in \{\los,\nlos\}
\end{equation}
 
\noindent
where $R_j$ denotes the distance to the closest UAV in $\Phi_{j}$. If the user associates with a UAV in $\Phi_j$ it will associate to the closest UAV in the set, as all of the remaining UAVs in the set will, by definition, provide a weaker signal to the user. Let $V=\{\texttt{0\los},\texttt{0\nlos},\texttt{l},\texttt{n}\}$ be the random variable denoting whether the serving UAV is in $\{x_0\}$ with a LOS channel, in $\{x_0\}$ with an NLOS channel, in $\Phi_{\los}$ or in $\Phi_{\nlos}$, respectively. The pdf of the serving UAV distance $R_\star$ will follow one of the four distance distributions given above, depending on which UAV type the user associates with. The user will associate to a UAV of type $v$ if there are no other UAVs that provide a stronger signal to it. This is referred to as the association probability and is given below.

\subsection{Association Probability}

\begin{proposition}
The probability that the user's serving UAV a distance $r_\star$ away with channel type $t_\star$ is its hotspot center UAV is given as 

\begin{align}
&\Ac_{\texttt{0} t_\star} = \Bigg(\prod_{j\in\{\los,\nlos\}}\exp\Big(-2\pi\hspace{-3mm}\int\limits_{0}^{c_j(t_\star)}\hspace{-3mm}\lambda_j(r) r \dr r \Big)\Bigg),
\label{eq:Ac0}
\end{align}
\end{proposition}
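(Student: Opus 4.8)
The plan is to read $\Ac_{\texttt{0}t_\star}$ as the \emph{conditional} association probability: given that the hotspot-centre UAV at $x_0$ lies at horizontal distance $r_\star$ and has channel type $t_\star$, it is the serving UAV precisely when no other UAV in $\Phi^!$ delivers a stronger long-term average signal. First I would write the average received power as a function of distance and channel type only, since the fast fading is averaged out in $\bar S_i$; using the step antenna pattern of \eqref{eq:antgain} this gives $\bar S_i = \eta_{r_i}\,\Eb[H_{T_i}]\,l(r_i,\gamma,\alpha_{T_i})$, which vanishes for $r_i>u(\omega,\gamma)$ and is otherwise a strictly decreasing, deterministic function of $r_i$ for each fixed type. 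The reference signal is thus $\bar S_0=\eta_{r_\star}\,\Eb[H_{t_\star}]\,l(r_\star,\gamma,\alpha_{t_\star})$, and a competing UAV of channel type $j$ at distance $r$ outranks it iff $\Eb[H_j]\,l(r,\gamma,\alpha_j)>\Eb[H_{t_\star}]\,l(r_\star,\gamma,\alpha_{t_\star})$ with $r\le u(\omega,\gamma)$. Because $l(\cdot,\gamma,\alpha_j)$ is strictly decreasing, this region is an interval $[0,c_j(t_\star))$, so I would define the critical radius $c_j(t_\star)$ as the (beam-capped) solution of $(c_j^2+\gamma^2)^{-\alpha_j/2}\Eb[H_j]=(r_\star^2+\gamma^2)^{-\alpha_{t_\star}/2}\Eb[H_{t_\star}]$ intersected with $[0,u(\omega,\gamma)]$.

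Second, I would translate the association event into a pair of void events: the hotspot UAV serves the user iff $\Phi_{\los}$ has no point inside the disk of radius $c_{\los}(t_\star)$ and, simultaneously, $\Phi_{\nlos}$ has no point inside the disk of radius $c_{\nlos}(t_\star)$. The key structural fact is that the LOS/NLOS splitting of $\Phi^!$ is an independent thinning (\cite{Haenggi_2013}[Theorem 2.36]), so $\Phi_{\los}$ and $\Phi_{\nlos}$ are \emph{independent} inhomogeneous PPPs with intensities $\lambda_{\los}(x)$ and $\lambda_{\nlos}(x)$. Independence lets the joint void probability factor into a product over $j\in\{\los,\nlos\}$, which is exactly the outer product appearing in \eqref{eq:Ac0}.

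Finally, each factor is evaluated with the void probability of a PPP, $\Pd(\Phi_j(B)=0)=\exp(-\int_B \lambda_j(x)\,\dr x)$ for a Borel set $B$. Taking $B$ to be the disk of radius $c_j(t_\star)$ and passing to polar coordinates, the radial symmetry of $\lambda_j(x)=\Pd_j(\lVert x\rVert)\lambda_u$ reduces the area integral to $2\pi\int_0^{c_j(t_\star)}\lambda_j(r)\,r\,\dr r$; multiplying the two factors reproduces $\Ac_{\texttt{0}t_\star}$ exactly.

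I expect the main obstacle to be the careful definition and case analysis of the critical distances $c_j(t_\star)$. The hard step antenna pattern means competitors outside the main beam contribute nothing, so $c_j(t_\star)$ must be capped at $u(\omega,\gamma)$; and because the two channel types carry different pathloss exponents one must verify that the stronger-signal region really is a single connected disk (so the relevant event is a genuine void over a disk) and handle the degenerate regimes where the reference signal is so strong that $c_j(t_\star)=0$ or so weak that $c_j(t_\star)=u(\omega,\gamma)$. Fixing these boundary conventions is precisely what secures the clean product-of-void-probabilities form asserted in the proposition.
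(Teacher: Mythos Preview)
Your proposal is correct and follows essentially the same route as the paper: both reduce the association event to the joint void event $\{R_{\los}>c_{\los}(t_\star)\}\cap\{R_{\nlos}>c_{\nlos}(t_\star)\}$, factor by the independence of the two thinned PPPs $\Phi_{\los},\Phi_{\nlos}$, and evaluate each factor via the PPP void probability in polar coordinates. The only cosmetic difference is that the paper defines the critical radii from the pure pathloss comparison (the Nakagami-$m$ fading is implicitly unit-mean, so your $\Eb[H_j]$ factors drop out) and then writes out the beam cap and degenerate cases explicitly as $c_{\los}(\los)=r_\star$, $c_{\nlos}(\los)=\sqrt{\max(0,(r_\star^2+\gamma^2)^{\alpha_{\los}/\alpha_{\nlos}}-\gamma^2)}$, $c_{\los}(\nlos)=\min(u(\omega,\gamma),\sqrt{(r_\star^2+\gamma^2)^{\alpha_{\nlos}/\alpha_{\los}}-\gamma^2})$, $c_{\nlos}(\nlos)=r_\star$.
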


\begin{proof}
The hotspot center UAV at $x_0$ will have the strongest signal if the closest UAVs in $\Phi_{\los}$ and $\Phi_{\nlos}$ are not close enough to provide a stronger signal. The probability $\Ac_{\texttt{0} t_\star}$ is then given as

\begin{align}
&\Ac_{\texttt{0} t_\star} = \nonumber\\
&\Pd\Big((r_0^2+\gamma^2)^{-\alpha_{t_\star}/2} > \max\big((R_\los^2+\gamma^2)^{-\alpha_\los/2},(R_\nlos^2+\gamma^2)^{-\alpha_\nlos/2}\big)\Big)\nonumber\\
%&= \Ed\Big[\Pd\Big(r_0 \leq \min\{c_\los(t_0),c_\nlos(t_0)\}\Big|R_0=r_0\Big)\Big]\\
&\overset{(a)}{=} \hspace{-3mm}\bigcap_{j\in\{\los,\nlos\}}\hspace{-2mm}\Pd\Big(R_j > c_j(t_\star)\Big)
%&= \Ed\Big[\bigcap_{j\in\{\los,\nlos\}}\Pd\Big(R_j \geq r^{\alpha_S/\alpha_j}\Big|R_0=r\Big)\Big]\\
\overset{(b)}{=} \Bigg(\prod_{j\in\{\los,\nlos\}}\exp\Big(-2\pi\hspace{-3mm}\int\limits_{0}^{c_j(t_\star)}\hspace{-3mm}\lambda_j(r) r \dr r \Big)\Bigg)
\label{eq:Ac0proof}
\end{align}

\noindent
where $(a)$ comes from the fact that $R_{\los}$ and $R_{\nlos}$ are distributed independently of each other and $(b)$ comes from the definition of the void probability of a PPP \cite{Haenggi_2013}. Here, $c_{j}(t_\star)$ denotes the lower bounds on LOS and NLOS interferer distances. These lower bounds are functions of the type of channel of the serving UAV and its horizontal distance. If the serving UAV is within LOS then $c_{\los}(\los) = r_\star$ and $c_{\nlos}(\los) = \sqrt{\max(0,(r_{\star}^{2}+\gamma^2)^{\alpha_{\los}/\alpha_{\nlos}} - \gamma^2)}$, and if the serving UAV is NLOS then $c_{\los}(\nlos) =  \min(\Rcone,\sqrt{(r_{\star}^{2}+\gamma^2)^{\alpha_{\nlos}/\alpha_{\los}} - \gamma^2})$ and $c_{\nlos}(\nlos) = r_\star$.
\end{proof}

\begin{proposition}
The probability that the serving UAV belongs to the set $\Phi_{j}$ is denoted as

\begin{align}
&\Ac_{j} =\exp\Big(-2\pi\int\limits_{0}^{c_{j\prime}(t_\star)}\lambda_{j\prime}(r) r\dr r\Big) \Bc_{\texttt{0}}, \quad j \in \{\los,\nlos\}
\label{eq:Ac1}
\end{align}

\noindent
where $j\prime$ denotes the opposite channel type of $j$ and $\Bc_{0}$ denotes the probability that the hotspot center UAV at $x_0$ is providing a weaker signal than the serving UAV from $\Phi_{j}$, and is defined as

\begin{align}
\Bc_{0} = 1-\left(\sum\limits_{k\in \{\los,\nlos\}}\int\limits_{0}^{c_k(t_\star)}\hspace{-2mm}f_{R_0k}(r)\dr r\right)\nonumber\\
\label{eq:Bc}
\end{align}

\end{proposition}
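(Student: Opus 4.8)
The plan is to condition on the serving UAV being the closest element of $\Phi_j$, located a distance $r_\star$ away with channel type $t_\star = j$, and then require its signal to dominate those of every other UAV that could plausibly serve the user. First I would observe that any other UAV in $\Phi_j$ is necessarily farther than $r_\star$ and therefore, sharing the same pathloss exponent $\alpha_j$, provides a strictly weaker signal; hence only two competitors remain: the nearest UAV in the opposite set $\Phi_{j\prime}$ and the hotspot-center UAV at $x_0$. The association event thus factors as the intersection of (i) the event that the nearest UAV in $\Phi_{j\prime}$ is weaker than the serving UAV and (ii) the event that the hotspot-center UAV is weaker.

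Next I would exploit the independence structure of the model. Because $\Phi_{\los}$ and $\Phi_{\nlos}$ arise from independently thinning the PPP $\Phi^!$, and because the hotspot-center UAV $x_0$ has its distance $R_0$ drawn from the MCP distribution in (\ref{eq:hotspotpdf}) independently of $\Phi^!$, the two events (i) and (ii) are independent and their joint probability factorizes into a product. For factor (i) I would translate the signal-dominance inequality $(R_{j\prime}^2+\gamma^2)^{-\alpha_{j\prime}/2} < (r_\star^2+\gamma^2)^{-\alpha_{t_\star}/2}$ into the distance condition $R_{j\prime} > c_{j\prime}(t_\star)$, using exactly the threshold functions $c_{j\prime}(\cdot)$ already introduced in the proof of Proposition~1; the void probability of the PPP $\Phi_{j\prime}$ then yields the factor $\exp\big(-2\pi\int_0^{c_{j\prime}(t_\star)}\lambda_{j\prime}(r)\,r\,\dr r\big)$, just as in step $(b)$ of (\ref{eq:Ac0proof}).

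For factor (ii), which equals $\Bc_0$, I would work with the complementary event that the hotspot-center UAV provides the \emph{stronger} signal. Conditioning on its channel type $T_0 = k$, this stronger-signal event is again a distance condition $R_0 < c_k(t_\star)$ with the \emph{same} thresholds as before, so its probability is obtained by integrating the joint density $f_{R_{0}k}(r) = \Pd_k(r) f_{R_0}(r)$ over $[0, c_k(t_\star)]$ and summing over $k \in \{\los,\nlos\}$. Subtracting this sum from unity gives $\Bc_0$ exactly as stated in (\ref{eq:Bc}), and multiplying the two factors completes the proof.

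I expect the main obstacle to lie in verifying factor (ii) rather than factor (i): because the hotspot-center UAV has support bounded by $\rmax$ and a channel type that is itself random, the stronger-signal event must be expressed jointly over distance and type, and one must check that the comparison thresholds $c_k(t_\star)$ — including the beamwidth cap at $\Rcone$ that enters through $c_\los(\nlos)$ when the serving UAV is NLOS, beyond which a LOS competitor has zero antenna gain — correctly capture the set of $(R_0, T_0)$ pairs for which $x_0$ would outperform the serving UAV. The key observation is that these thresholds are symmetric, i.e.\ the very functions $c_k(t_\star)$ derived in Proposition~1 for comparing an interferer of type $k$ against a server of type $t_\star$ apply verbatim with the hotspot UAV now playing the role of the candidate interferer; once this is established the remaining integration is routine.
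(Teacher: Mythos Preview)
Your proposal is correct and follows essentially the same approach as the paper's proof: factor the association event into the two independent conditions $\Pd(R_{j^\prime} > c_{j^\prime}(t_\star))$ and $\Pd(\bar{S}_0 < \bar{S}_\star)$, evaluate the first via the PPP void probability, and obtain $\Bc_0$ as the complement of the union $\{R_0\leq c_{\los}(t_\star),T_0=\los\}\cup\{R_0\leq c_{\nlos}(t_\star),T_0=\nlos\}$. Your write-up is in fact more explicit than the paper's about why the two factors are independent and why the thresholds $c_k(t_\star)$ carry over unchanged to the hotspot-center UAV.
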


\begin{proof}
$\Ac_{j}$ is the probability that there are no UAVs in the set $\Phi_{j^\prime}\cup\{x_0\}$ which are close enough to the user to provide a stronger signal than the UAV a distance $r_\star$ away with channel type $t_\star = j$.
\begin{align}
&\Ac_{j} = 
%&\Pd\Big((R_0^2+\gamma^2)^{-\alpha_{t_1}/2} \geq \max\{(R_\los^2+\gamma^2)^{-\alpha_\los/2},((R_\nlos^2+\gamma^2)^{-\alpha_\nlos/2}\}\Big)\\
%&\Ed\Big[\Pd\Big(r \leq \min\{c_\los(t_1,r_1),c_\nlos(t_1,r_1),c_0(t_1,r_1)\}\Big|R_1=r_1\Big)\Big]\\
\Pd\Big(R_{j^\prime} > c_{j^\prime}(t_\star)\Big)\Pd\Big(\bar{S}_0 <\bar{S}_\star\Big)\nonumber\\
&\overset{(a)}{=}\Pd\Big(R_{j^\prime} > c_{j^\prime}(t_\star)\Big)\nonumber\\
&\cdot\Big(1-\Big(\Pd(R_0\leq c_{\los}(t_\star),T_{0}=\los)+\Pd(R_0\leq c_{\nlos}(t_\star),T_{0}=\nlos)\Big)\Big) \nonumber\\
&=\exp\Big(-2\pi\hspace{-3mm}\int\limits_{0}^{c_{j^\prime}(t_\star)}\hspace{-3mm}\lambda_{j^\prime}(r) r \dr r \Big) 
\Bigg(1-\left(\sum\limits_{k\in \{\los,\nlos\}}\int\limits_{0}^{c_k(t_\star)}\hspace{-2mm}f_{R_0k}(r)\dr r\right)\Bigg)\nonumber\\
\label{eq:Ac1proof}
\end{align}

\noindent
where $(a)$ comes from finding the probability of the hotspot center UAV $x_0$ providing a weaker signal by taking the complement of the probability that $x_0$ is either LOS and providing a stronger signal or NLOS and providing a stronger signal.
\end{proof}

\subsection{Laplace Transform of Aggregate Interference and Noise}
The SINR is affected by the aggregate interference $I$ as well as the noise $\sigma^2$, and as $I$ is a random variable the sum of the two components is also a random variable. In this section we provide an expression for the k-th derivative of the Laplace transform of the aggregate interference and noise $\Lc_{(I+\sigma^2)}$; this will be used for derivations in the next sub-section.

\begin{proposition}
The k-th derivative of the Laplace transform $\Lc_{(I+\sigma^2)}$ is obtained as 

\begin{align}
&\frac{d^{k}\Lc_{(I+\sigma^2)}(s)}{d s^k} = \sum_{i_0+i_{\los}+i_{\nlos}+i_{\sigma}=k}\hspace{-5mm}\frac{k!}{i_0!i_{\los}!i_{\nlos}!i_{\sigma}!}\nonumber \\
&\cdot\frac{d^{i_0}\Lc_{I_0}(s)}{ds^{i_0}} \frac{d^{i_{\los}} \Lc_{I_{\los}}(s)}{ds^{i_{\los}}}
 \frac{d^{i_{\nlos}}\Lc_{I_{\nlos}}(s)}{ds^{i_{\nlos}}}\frac{d^{i_{\sigma}}\exp(-s\sigma^2)}{ds^{i_{\sigma}}} \nonumber \nonumber\\.
\label{eq:condProb3v1}
\end{align}

\noindent
where $\Lc_{I_{\los}}$, $\Lc_{I_{\nlos}}$ and $\Lc_{I_{0}}$ are the Laplace transforms of the aggregate interference from $\Phi_{\los}$, $\Phi_{\nlos}$ and $x_0$, respectively.
\end{proposition}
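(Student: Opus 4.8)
The plan is to first show that the Laplace transform of the combined interference-plus-noise term factors into a product of four mutually independent contributions, and then to obtain its $k$-th derivative by applying the general (multinomial) Leibniz product rule to that product.

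First I would decompose the aggregate interference as $I = I_0 + I_{\los} + I_{\nlos}$, where $I_0$ is the power received from the hotspot-center UAV at $x_0$ in the event that it is not the serving UAV, and $I_{\los}$, $I_{\nlos}$ are the aggregate powers received from the point processes $\Phi_{\los}$ and $\Phi_{\nlos}$, respectively. The crucial observation is that these three random variables are mutually independent: $\Phi^!$ is independent of the reference hotspot UAV $x_0$ by Slivnyak's theorem \cite{Haenggi_2013}, while $\Phi_{\los}$ and $\Phi_{\nlos}$ are independent point processes because they arise from an independent thinning of $\Phi^!$ \cite{Haenggi_2013}. Since the noise $\sigma^2$ is deterministic, writing $\Lc_{(I+\sigma^2)}(s)$ as an expectation of an exponential and splitting it across the independent summands yields the factorisation
\[
\Lc_{(I+\sigma^2)}(s) = \Eb\!\left[e^{-s(I_0+I_{\los}+I_{\nlos}+\sigma^2)}\right] = \Lc_{I_0}(s)\,\Lc_{I_{\los}}(s)\,\Lc_{I_{\nlos}}(s)\,e^{-s\sigma^2},
\]
where the last factor is the (trivial) Laplace transform of the constant $\sigma^2$ and the remaining three factors are precisely the per-source Laplace transforms named in the statement.

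Next I would differentiate this product $k$ times in $s$. Because the product has four factors, the appropriate tool is the multinomial form of the Leibniz rule, which distributes the $k$ derivatives across the four factors in every admissible way indexed by nonnegative integers $i_0, i_{\los}, i_{\nlos}, i_{\sigma}$ with $i_0 + i_{\los} + i_{\nlos} + i_{\sigma} = k$, each term carrying the multinomial weight $k!/(i_0!\,i_{\los}!\,i_{\nlos}!\,i_{\sigma}!)$. Applying this directly to the four factors above reproduces exactly the claimed expression, with the noise factor contributing the $\frac{d^{i_{\sigma}}}{ds^{i_{\sigma}}}\exp(-s\sigma^2)$ term.

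The bulk of this proposition is bookkeeping, and I expect the only step needing genuine care to be the justification of the product factorisation, i.e.\ establishing that $I_0$, $I_{\los}$ and $I_{\nlos}$ are mutually independent. This hinges on Slivnyak's theorem to decouple the reference-hotspot UAV from the remaining network and on the independent-thinning property of the PPP to decouple the LOS and NLOS interferer sets; one should also note that conditioning on the association event is compatible with this independence, since the lower-bound constraints $c_j(t_\star)$ act separately on the distinct, independent point sets. Once independence is secured, multiplicativity of the expectation over independent exponentials delivers the factorisation, and the remaining derivative computation is a routine application of the generalised Leibniz rule.
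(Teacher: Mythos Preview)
Your proposal is correct and follows essentially the same approach as the paper: decompose $I$ into the independent contributions $I_0$, $I_{\los}$, $I_{\nlos}$ (using Slivnyak's theorem and independent thinning of the PPP), factor the Laplace transform of $I+\sigma^2$ into the product $\Lc_{I_0}\Lc_{I_{\los}}\Lc_{I_{\nlos}}\exp(-s\sigma^2)$, and then apply the general Leibniz rule to obtain the $k$-th derivative. If anything, your justification of the independence step is slightly more explicit than the paper's.
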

\begin{proof}
The aggregate interference power $I$ is the sum of the interference power $I_{\los}$, $I_{\nlos}$ and $I_{0}$ from the UAVs in $\Phi_{\los}$, $\Phi_{\nlos}$ and $x_0$, respectively. Following the PPP distribution of the hotspot centers and the independent channel type assignment these UAV sets are independent of one another; it follows that the interference random variables are independent also. This means that the Laplace transform of the aggregate interference and noise $\Lc_{(I+\sigma^2)}$ can be represented as the product of Laplace transforms $\Lc_{I_{\los}}$, $\Lc_{I_{\nlos}}$ and $\Lc_{I_{0}}$, as well as $\exp(-s\sigma^2)$. The derivative of $\Lc_{(I+\sigma^2)}$ can be expressed in the form given in \Eq{condProb3v1} following the general Leibniz rule. 
\end{proof}

 \textit{Remark 1: }
 If the user is served by the hotspot center UAV at $x_0$ then $I_0$ will be 0 and $\Lc_{I_0}$ will be 1, as the UAV will not create interference for itself. 
 
 The Laplace transforms $\Lc_{I_{\los}}$ and $\Lc_{I_{\nlos}}$ have been previously derived by us in \cite{Galkin_2017}, and the Laplace transform $\Lc_{I_0}$ for the case when the hotspot center interference $I_0$ is not zero is given below.

\begin{proposition}
The Laplace transform of the interference $I_0$, when the user is not served by the hotspot UAV, is given as 

\begin{align}
\Lc_{I_0}(s) = \frac{1}{(\rmax^2)\Bc_{0}}\sum\limits_{j\in\{\los,\nlos\}}(\Cc_j(s) + \Dc_j)
\label{eq:laplaceFinal}
\end{align}

\noindent
where 

\begin{align}
&\Cc_j(s) = \sum\limits_{q=\floor*{c_j(t_\star)\sqrt{\beta\delta}}}^{\floor*{\min(\rmax,\Rcone)\sqrt{\beta\delta}}} \hspace{-5mm}\Pd_{j}(l)\Bigg((\Upper^2-\Lower^2)+\sum\limits_{k=1}^{m_{j}}\binom{m_{j}}{k}(-1)^{k}\nonumber \\
&\cdot\bigg((\Upper^2+\gamma^2)\mbox{$_2$F$_1$}\Big(k,\frac{2}{\alpha_{j}};1+\frac{2}{\alpha_{j}};  -\frac{m_{j}(\Upper^2+\gamma^2)^{\alpha_{j}/2}}{\eta_{r_0}s}\Big) \nonumber\\
&-(\Lower^2+\gamma^2)\mbox{$_2$F$_1$}\Big(k,\frac{2}{\alpha_{j}};1+\frac{2}{\alpha_{j}};-\frac{m_{j}(\Lower^2+\gamma^2)^{\alpha_{j}/2}}{\eta_{r_0}s}\Big)\bigg)\Bigg)
\label{eq:laplaceFinalC}
\end{align}

For the case when $c_{j}(t_\star)< \min(\rmax,\Rcone)$, with $\mbox{$_2$F$_1$}(a,b;c;d)$ denoting the Gauss hypergeometric function, $l = \max(c_j(t_\star),q/\sqrt{\beta\delta})$ and $u = \min(\rmax,\Rcone,(q+1)/\sqrt{\beta\delta})$. If $c_{j}(t_\star)\geq\min(\rmax,\Rcone)$ then $\Cc_j(s) = 0$. 

$\Dc_j$ is given as

\begin{align}
\Dc_j = 
\hspace{10mm}2\hspace{-5mm}\int\limits_{\max(c_{j}(t_\star),\Rcone)}^{\rmax}\hspace{-5mm}\Pd_{j}(r_0)r_0 \dr r_0
\end{align}

\noindent
if $\rmax > \max(c_{j}(t_\star),\Rcone)$ and 0 otherwise. Note that $\Dc_j$ is not a function of $s$.

\end{proposition}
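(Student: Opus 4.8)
The plan is to obtain $\Lc_{I_0}(s)=\Eb[e^{-sI_0}]$ by averaging the interference $I_0=\eta_{r_0}H_{T_0}l(r_0,\gamma,\alpha_{T_0})$ over its fading $H_{T_0}$, its channel type $T_0$, and its horizontal distance $R_0$, conditioned on the event that the hotspot UAV at $x_0$ is \emph{not} the serving UAV. That conditioning is exactly the one characterised in the association analysis: for a serving UAV at distance $r_\star$ with type $t_\star$, the UAV $x_0$ is weaker iff $R_0>c_j(t_\star)$ when $T_0=j$. Hence, given $r_\star,t_\star$, the conditional joint law of $(R_0,T_0)$ is $f_{R_{0j}}(r)/\Bc_0$ supported on $r>c_j(t_\star)$, with $f_{R_{0j}}(r)=\Pd_j(r)\,2r/\rmax^2$ (from \Eq{hotspotpdf}) and $\Bc_0$ as in \Eq{Bc}, and I would write
\[
\Lc_{I_0}(s)=\frac{1}{\rmax^2\Bc_0}\sum_{j\in\{\los,\nlos\}}\int_{c_j(t_\star)}^{\rmax}\Eb_{H}\!\left[e^{-sI_0}\right]\Pd_j(r)\,2r\,\dr r ,
\]
which already isolates the prefactor $1/(\rmax^2\Bc_0)$ appearing in \Eq{laplaceFinal}.

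First I would perform the fading average. Since $H_j$ is gamma with parameter $m_j$ and independent of $R_0$ and $T_0$, the quantity $\Eb_H[e^{-sI_0}]$ equals the gamma Laplace transform $\big(1+s\eta_{r_0}(r^2+\gamma^2)^{-\alpha_j/2}/m_j\big)^{-m_j}$ whenever $x_0$ illuminates the user, i.e.\ $r\le\Rcone$ so that $\eta_{r_0}$ takes its nonzero value from \Eq{antgain}; for $r>\Rcone$ we have $\eta_{r_0}=0$, hence $I_0=0$ and $\Eb_H[e^{-sI_0}]=1$. This dichotomy motivates splitting each distance integral at the beam edge $\Rcone$: the in-beam range $[c_j(t_\star),\min(\rmax,\Rcone)]$ yields $\Cc_j(s)$, while on the out-of-beam range $[\max(c_j(t_\star),\Rcone),\rmax]$ the integrand collapses to $\Pd_j(r)\,2r$ and yields $\Dc_j$. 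The stated side conditions (that $\Cc_j$ or $\Dc_j$ vanish) are merely the statements that the corresponding interval is empty.

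The remaining task is to evaluate the in-beam integral in closed form against two obstacles. First, $\Pd_j(r)$ is piecewise constant in $r$ through $d=\floor*{r\sqrt{\beta\delta}}$ in \Eq{LOS}, so I would partition $[c_j(t_\star),\min(\rmax,\Rcone)]$ into the annuli on which $d$ is fixed, indexed by $q$ from $\floor*{c_j(t_\star)\sqrt{\beta\delta}}$ to $\floor*{\min(\rmax,\Rcone)\sqrt{\beta\delta}}$; on the annulus $[\Lower,\Upper]$ with $\Lower=\max(c_j(t_\star),q/\sqrt{\beta\delta})$ and $\Upper=\min(\rmax,\Rcone,(q+1)/\sqrt{\beta\delta})$ the factor $\Pd_j(\Lower)$ is constant and leaves the integral. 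Second, to integrate the gamma transform I would expand it binomially: with $g=s\eta_{r_0}(r^2+\gamma^2)^{-\alpha_j/2}$, $\big(1+g/m_j\big)^{-m_j}=\big(1-\tfrac{g}{m_j+g}\big)^{m_j}=\sum_{k=0}^{m_j}\binom{m_j}{k}(-1)^k\big(1+m_j(r^2+\gamma^2)^{\alpha_j/2}/(s\eta_{r_0})\big)^{-k}$. Substituting $t=r^2+\gamma^2$ (so $2r\,\dr r=\dr t$) and then $\tau=t^{\alpha_j/2}$ reduces each term to the standard primitive $\int_0^{T}\tau^{c-1}(1+w\tau)^{-k}\,\dr\tau=\tfrac{T^{c}}{c}\,{}_2F_1(k,c;c+1;-wT)$ with $c=2/\alpha_j$ and $w=m_j/(s\eta_{r_0})$; evaluating between $\Lower$ and $\Upper$ gives precisely the $(\Upper^2+\gamma^2)$- and $(\Lower^2+\gamma^2)$-weighted hypergeometric terms of \Eq{laplaceFinalC}, with the $k=0$ term collapsing (since ${}_2F_1(0,\cdot;\cdot;\cdot)=1$) to the $(\Upper^2-\Lower^2)$ contribution. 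Summing over $q$ and over $j$ then assembles \Eq{laplaceFinal}.

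The main obstacle I anticipate is bookkeeping rather than analysis: reconciling the three competing radial limits $c_j(t_\star)$, $\rmax$ and $\Rcone$ so that the beam split and the annular partition produce exactly the limits $\Lower,\Upper$ and the emptiness conditions in the statement, and keeping the binomial index $k$ disjoint from the annulus index $q$ and the channel index $j$. The only genuinely computational ingredient is the identity $\int_0^{T}\tau^{c-1}(1+w\tau)^{-k}\,\dr\tau=\tfrac{T^{c}}{c}\,{}_2F_1(k,c;c+1;-wT)$, which is what converts the binomially-expanded gamma transform into the Gauss hypergeometric form.
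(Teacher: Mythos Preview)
Your proposal is correct and follows essentially the same route as the paper's proof in Appendix~A: conditioning on the event $\bar S_0<\bar S_\star$ via Bayes' rule to produce the truncated density $f_{R_{0j}}/\Bc_0$, averaging the gamma fading to obtain the $(m_j/(m_j+\eta_{r_0}s(r^2+\gamma^2)^{-\alpha_j/2}))^{m_j}$ kernel, splitting at $\Rcone$ to separate $\Cc_j$ from $\Dc_j$, exploiting the step structure of $\Pd_j$ to partition into annuli, and then combining the substitution $z=(r^2+\gamma^2)^{\alpha_j/2}$ with binomial expansion and the integral identity \cite{Ryzhik_2007}[Eq.~3.194.1] to reach the hypergeometric form. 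The only cosmetic difference is that you apply the binomial expansion before the change of variables rather than after, which has no effect on the argument.
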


\begin{proof}
The proof is given in Appendix A. 
\end{proof}

 \textit{Remark 2: }The Laplace transform $\Lc_{(I+\sigma^2)}(s)$ requires higher-order derivatives of the Laplace transform $\Lc_{I_0}(s)$; the required analytical expressions are given in Appendix B.
 
 \subsection{General model}
 In this subsection we present the main analytical result of our paper.
 
 \begin{theorem}
 The coverage probability of the reference user served by a UAV network that positions itself above user hotspots is given as 
 
 \begin{align}
 & \Pd_c(\theta,\gamma,\lambda_u,\omega,\rmax)= \sum_{v\in\{\texttt{0l},\texttt{0n},\texttt{l},\texttt{n}\}} \nonumber\\
 &\cdot \int_{0}^{\Rcone} \Ac_{v}\sum\limits_{k=0}^{m_{t_\star}-1}(-1)^k \frac{s_r^k}{k!}\frac{d^{k}\Lc_{(I+\sigma^2)}(s_r)}{d s_r^k} f_{R_v}(r_\star)\dr r_\star
 \label{eq:general}
 \end{align}
 
 \end{theorem}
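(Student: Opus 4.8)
The plan is to begin from the definition $\Pd_c = \Pd(\sinr > \theta)$ and decompose it according to which of the four candidate UAVs serves the reference user, indexed by $v\in\{\texttt{0l},\texttt{0n},\texttt{l},\texttt{n}\}$. Because the serving UAV is by definition the one delivering the strongest average signal, these four events partition the sample space, so by the law of total probability $\Pd_c = \sum_v \Pd(\sinr > \theta,\, V=v)$. For each type I would condition on the serving UAV lying at horizontal distance $r_\star$ with channel type $t_\star$: the density of this candidate distance is $f_{R_v}(r_\star)$ and the probability that the candidate is in fact the globally strongest UAV (no other UAV provides a stronger signal) is the association probability $\Ac_v$ from Propositions 1 and 2. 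This produces the outer sum over $v$ and the integral $\int_0^{\Rcone}\Ac_v(\cdots)f_{R_v}(r_\star)\,\dr r_\star$; the upper limit is $\Rcone$ because, by the antenna-gain model in \Eq{antgain}, any UAV farther than $\Rcone$ contributes zero received power and hence cannot serve the user.

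The core of the argument is the conditional coverage probability given a serving UAV at $(r_\star,t_\star)$. Writing $\sinr = S_\star/(I+\sigma^2)$ with $S_\star = \eta_{r_\star}H_{t_\star}l(r_\star,\gamma,\alpha_{t_\star})$, the event $\sinr>\theta$ is equivalent to $H_{t_\star} > \theta(I+\sigma^2)/(\eta_{r_\star}l(r_\star,\gamma,\alpha_{t_\star}))$. Since $H_{t_\star}$ is a normalised gamma (Nakagami-$m$) variable with integer parameter $m_{t_\star}$ and is independent of the interferers, I would condition on $I$ and apply the exact gamma complementary CDF $\Pd(H_{t_\star}>x\mid I) = e^{-m_{t_\star}x}\sum_{k=0}^{m_{t_\star}-1}(m_{t_\star}x)^k/k!$. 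Setting $s_r = m_{t_\star}\theta/(\eta_{r_\star}l(r_\star,\gamma,\alpha_{t_\star}))$ turns $m_{t_\star}x$ into $s_r(I+\sigma^2)$, so the conditional coverage becomes $\sum_{k=0}^{m_{t_\star}-1}\frac{s_r^k}{k!}(I+\sigma^2)^k e^{-s_r(I+\sigma^2)}$.

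Next I would take the expectation over the interference-plus-noise. The identity $\frac{d^k}{ds^k}e^{-s(I+\sigma^2)} = (-1)^k(I+\sigma^2)^k e^{-s(I+\sigma^2)}$, together with differentiation under the expectation, gives $\Eb[(I+\sigma^2)^k e^{-s_r(I+\sigma^2)}] = (-1)^k \frac{d^k\Lc_{(I+\sigma^2)}(s_r)}{d s_r^k}$. Substituting this term by term converts the conditional coverage into the inner sum $\sum_{k=0}^{m_{t_\star}-1}(-1)^k \frac{s_r^k}{k!}\frac{d^k\Lc_{(I+\sigma^2)}(s_r)}{d s_r^k}$, exactly as in the statement. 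Assembling the outer decomposition over $v$, the distance density $f_{R_v}$, the association probability $\Ac_v$, and this inner sum then yields \Eq{general}.

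The main obstacle is the bookkeeping of the conditioning rather than any single hard estimate. The Laplace transform $\Lc_{(I+\sigma^2)}$ appearing inside must be the one conditioned on the serving UAV being at $(r_\star,t_\star)$: fixing the serving UAV excludes from the interference every UAV that would otherwise be stronger, which is precisely why the interferer processes are restricted to distances exceeding the bounds $c_{\los}(t_\star)$ and $c_{\nlos}(t_\star)$ and why the hotspot term $\Lc_{I_0}$ takes the truncated form of Proposition 4. I must therefore ensure that $\Ac_v$, $f_{R_v}$, and the derivatives of $\Lc_{(I+\sigma^2)}$ are all evaluated under the same conditioning, and that the independence of $H_{t_\star}$ from $I$ used to separate the gamma CCDF from the interference average is justified by the independent channel-type assignment and per-UAV fading established in the derivation of Proposition 3.
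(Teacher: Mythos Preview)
Your proposal is correct and follows essentially the same route as the paper: condition on the serving UAV's distance and channel type, apply the gamma CCDF to $H_{t_\star}$ to obtain the finite sum, convert $(I+\sigma^2)^k e^{-s_r(I+\sigma^2)}$ into derivatives of the Laplace transform, and then integrate against $f_{R_v}$ weighted by $\Ac_v$ over the four association types. Your added remarks about the conditional nature of $\Lc_{(I+\sigma^2)}$ and the $\Rcone$ upper limit are exactly the points the paper makes at the end of its proof.
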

 
 \noindent
 \begin{proof}
 The coverage probability is derived as 
 \begin{align}
 &\Pd\Big(\sinr\geq\theta \Big) \nonumber\\
 &= \Pd\Big(\frac{S_{\star}}{I+\sigma^2}\geq\theta \Big)\nonumber\\
 &=\Pd\Big(\frac{\eta_{R_\star} l(R_\star,\gamma,\alpha_{T_\star})H_{T_\star}}{I+\sigma^2}\geq\theta \Big)\nonumber\\
 &=\Pd\Big(H_{T_\star}\geq\frac{\theta(I+\sigma^2)}{\eta_{R_\star} l(R_\star,\gamma,\alpha_{T_\star})} \Big)\nonumber\\
 &\overset{(a)}{=}\Ed_{R_\star,T_\star}\Big[ \Ed_{I}\Big[ \frac{\Gamma(m_{t_\star},s_r (I+\sigma^2))}{\Gamma(m_{t_\star})}\Big] \Big]\nonumber\\
% \end{align}
 %\begin{align}
 &\overset{(b)}{=}\Ed_{R_\star,T_\star}\Big[ \Ed_{I}\Big[ \exp(-s_r (I+\sigma^2))\sum\limits_{k=0}^{m_{t_\star}-1}\frac{(s_r (I+\sigma^2))^k}{k!}\Big] \Big]\nonumber\\
 &\overset{(c)}{=}\Ed_{R_\star,T_\star}\Big[ \sum\limits_{k=0}^{m_{t_\star}-1}(-1)^k \frac{s_r^k}{k!}\Ed_{I}\Big[\frac{d^{k}\exp(-s_r (I+\sigma^2))}{d s_r^k}\Big] \Big]\nonumber\\
 &\overset{(d)}{=}\Ed_{R_\star,T_\star}\Big[ \sum\limits_{k=0}^{m_{t_\star}-1}(-1)^k \frac{s_r^k}{k!}\frac{d^{k}\Lc_{(I+\sigma^2)}(s_r)}{d s_r^k}\Big]\nonumber\\
 \label{eq:generalProof}
 \end{align}
 \noindent
 where (a) comes from the random fading $H_{T_\star}$ being gamma distributed with channel-dependent fading parameter $m_{T_\star}$ with $\Gamma(.)$ and $\Gamma(.,.)$ being the gamma and upper incomplete gamma functions, respectively, where $s_r = m_{t_\star}\theta/(\eta_{r_\star} l(r_\star,\gamma,\alpha_{t_\star}))$, (b) comes from expressing the incomplete gamma function as in \cite{Ryzhik_2007}[8.352.2], (c) arises from the substitution $\exp(-s_r (I+\sigma^2))((I+\sigma^2))^k = (-1)^k d^{k}\exp(-s_r (I+\sigma^2))/d s_r^k$, (d) comes from the Leibniz integral rule. The final expression in \Eq{general} comes from taking the expectation with respect to $R_\star$ and $T_\star$, using the probability density functions given in Section IV.A and the association probabilities given in Section IV.B.
 
 Note that, following the antenna gain definition in \Eq{antgain}, a UAV will have a gain of 0 at the reference user if it is further away than $\Rcone$; as such we consider $\Rcone$ to be the upper limit on the serving UAV distance in the integral above.
 \end{proof}
 
 For comparison against a network of UAVs positioned according to a PPP at a fixed height, we present the following corollary.
 
 \begin{corollary}
 The coverage probability of the reference user when served by a UAV network that is positioned according to a PPP independently of hotspot locations is given as
 \begin{align}
  & \Pd_c(\theta,\gamma,\lambda_u,\omega)=\sum_{v\in\{\los,\nlos\}} \nonumber\\
  &\cdot \int_{0}^{\Rcone} \Ac_{v}\sum\limits_{k=0}^{m_{t_\star}-1}(-1)^k \frac{s_r^k}{k!}\frac{d^{k}\Lc_{(I+\sigma^2)}(s_r)}{d s_r^k} f_{R_v}(r_\star)\dr r_\star
  \label{eq:generalPPP}
  \end{align}
 \end{corollary}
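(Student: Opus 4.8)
The plan is to re-run the coverage-probability derivation of the preceding theorem essentially verbatim, since the propagation model, the Nakagami-$m$ fading assumption, and the SINR definition are unchanged; only the point process governing the UAV locations differs. Concretely, steps $(a)$--$(d)$ of \Eq{generalProof} rest solely on $H_{T_\star}$ being gamma distributed and on the independence of $I$ from the serving-link distance, neither of which is affected by how the UAVs are positioned. I would therefore reuse those steps directly and focus all attention on how the change in placement alters the case set $v$, the association probabilities $\Ac_v$, the serving-distance pdfs $f_{R_v}$, and the Laplace transform $\Lc_{(I+\sigma^2)}$.

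The structural difference is that when $\Phi_u$ is a PPP positioned independently of the hotspots, no UAV is guaranteed to sit above the reference user's hotspot center. Because $\Phi_u$ is independent of the user process, conditioning on a typical reference user at the origin leaves $\Phi_u$ a PPP of intensity $\lambda_u$ (equivalently, by Slivnyak's theorem) with no distinguished point, so the decomposition $\Phi_u=\{x_0\}\cup\Phi^!$ is replaced by $\Phi_u$ itself. First I would observe that the thinning of $\Phi_u$ into the LOS and NLOS sub-processes $\Phi_{\los}$ and $\Phi_{\nlos}$ still applies unchanged, so the closest-UAV distance pdfs $f_{R_{\los}}$ and $f_{R_{\nlos}}$ remain valid. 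The serving UAV is then necessarily the strongest among these two candidates, which collapses the case set from $\{\texttt{0l},\texttt{0n},\los,\nlos\}$ to $\{\los,\nlos\}$ and explains why the outer sum in \Eq{generalPPP} runs over only these two events.

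Next I would specialise the association probability \Eq{Ac1} and the interference Laplace transform \Eq{laplaceFinal}. Since the hotspot-center UAV $x_0$ no longer exists, the event that it provides a weaker signal than the serving UAV holds with certainty, so the factor $\Bc_{0}$ defined in \Eq{Bc} becomes identically $1$; the association probability $\Ac_j$ then reduces to the pure void probability $\exp\big(-2\pi\int_0^{c_{j\prime}(t_\star)}\lambda_{j\prime}(r)\,r\,\dr r\big)$ that no opposite-type UAV is close enough to dominate the link. Correspondingly, the interference contribution $I_0$ from the hotspot UAV is now identically zero, so $\Lc_{I_0}\equiv 1$ holds unconditionally (the situation flagged in Remark 1), and the multinomial derivative sum of \Eq{condProb3v1} retains only the terms built from $\Lc_{I_{\los}}$, $\Lc_{I_{\nlos}}$ and the noise factor $\exp(-s\sigma^2)$.

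I expect the routine manipulations to go through cleanly, so the principal obstacle is one of bookkeeping rather than of technique: I must verify that removing $x_0$ leaves the lower bounds $c_j(t_\star)$ on the interferer distances, and the upper integration limit $\Rcone$ imposed by the antenna gain in \Eq{antgain}, exactly as in the theorem, and confirm that no residual cross term between $x_0$ and the PPP survives once the substitutions $\Bc_0\to 1$ and $\Lc_{I_0}\to 1$ are made. Reassembling these simplified pieces into the template of \Eq{generalProof} then yields \Eq{generalPPP} directly.
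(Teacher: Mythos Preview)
Your proposal is correct, but the paper obtains the corollary by a different and much shorter device: rather than re-running the derivation with $x_0$ removed, it simply takes the formal limit $\rmax\to\infty$ in the Theorem's expression \Eq{general}. Since $f_{R_0}(r)=2r/\rmax^2$, this limit forces $f_{R_0}(r_\star)\to 0$ over the finite range $0\le r_\star\le\Rcone$, which kills the $\texttt{0l}$ and $\texttt{0n}$ terms in the outer sum; simultaneously the integrals defining $\Bc_0$ in \Eq{Bc} vanish and $\Lc_{I_0}\to 1$, collapsing the formula to \Eq{generalPPP}. Your structural argument---delete $x_0$ from the model, invoke Slivnyak, and check that each ingredient ($\Ac_j$, $f_{R_j}$, $\Lc_{(I+\sigma^2)}$) simplifies accordingly---reaches exactly the same endpoint and is arguably more transparent about \emph{why} the simplifications hold, whereas the paper's limit trick is more economical because it reuses the Theorem wholesale without re-verifying any intermediate step.
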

 
 \begin{proof}
 If the UAV network is distributed independently of the hotspot centers then the reference user does not have a UAV above its hotspot center, and can only be served by a UAV from the set $\Phi^!$. The expression in \Eq{generalPPP} is obtained by setting $\rmax \to \infty$, which has the effect of setting $f_{R_0}(r_\star) \to 0$ for the range $0 \leq r_\star \leq \Rcone$, $\Bc_{0}$ and $\Lc_{I_0}\to 1$ and which reduces the expression given in \Eq{general} to the one in \Eq{generalPPP}. Note that \Eq{generalPPP} corresponds to the result presented by us in our previous work \cite{Galkin_2017}.
 \end{proof}
 
\section{Numerical Results}

\subsection{Model Performance Evaluation}

\begin{table}[b!]
\vspace{-3mm}
\begin{center}
\caption{Numerical Result Parameters}
\begin{tabular}{ |c|c| } 
 \hline
 Parameter & Value  \\ 
 \hline
 $\omega$ & \unit[150]{deg} \\
 $\alpha_{\los}$ & 2.1 \\
 $\alpha_{\nlos}$ & 4 \\
 $m_{\los}$ & 3 \\
 $m_{\nlos}$ & 1 \\
 $\mu$ & \unit[0.1]{W} \\
 $\sigma^2$ & \unit[$10^{-9}$]{W} \\
 $\beta$ & \unit[300]{$/\text{km}^2$}\\
 $\delta$ & 0.5\\
 $\kappa$ & \unit[20]{m} \\
 $\theta$ & \unit[0]{dB}\\
 \hline
\end{tabular}
 \label{tab:table}
\end{center}
\end{table}

In this subsection we evaluate the performance of our analytical model by generating numerical results using the results derived in the previous section as well as simulations. We simulate random distributions of user hotspots across multiple Monte Carlo (MC) trials and record the coverage probability values when UAVs are placed above the hotspot centers. In \Fig{ClusterRad} to \Fig{PPPDensity} below, solid lines denote the analytical values for the coverage probability and the markers denote results from MC trials. Unless stated otherwise the parameters used for the numerical results are from Table \ref{tab:table}. %The transmit power of the UAVs and the noise power are taken as \unit[0.1]{W} and \unit[$10^{-9}$]{W}, respectively, with the wireless channel parameters $\alpha$ and $m$ being set to 2.1 and 3, respectively. In \Cref{fig:assProb,fig:NakSINR,fig:NakDensityN,fig:NakAngleN}, solid lines denote the analytical values for the coverage probability (from \Eq{pcov_final}) and the markers denote results from Monte Carlo trials.

\begin{figure}[t!]
\centering
	\includegraphics[width=.45\textwidth]{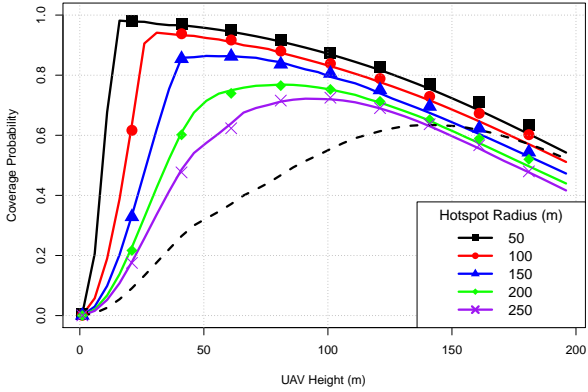}
    \vspace{-5mm}
	\caption{
	Coverage probability given a hotspot density of \unit[5]{$\text{/km}^2$} and beamwidth $\omega$ of 150 degrees. Solid lines denote analytical results, markers denote simulations, and the dashed line denotes the result for the PPP distribution of UAVs.
	}
	\label{fig:ClusterRad}
	\vspace{-3mm}
\end{figure}

\begin{figure}[t!]
\centering
	\includegraphics[width=.45\textwidth]{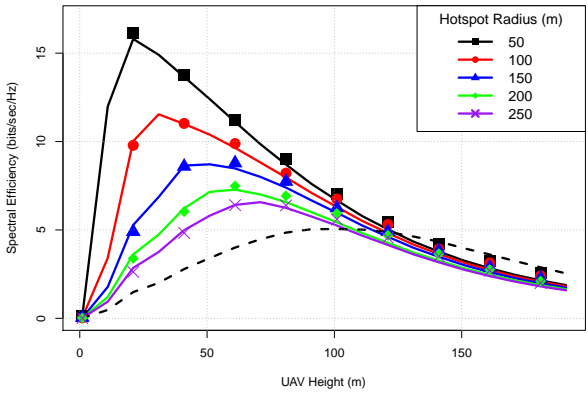}
    \vspace{-5mm}
	\caption{
	Spectral efficiency given a hotspot density of \unit[5]{$\text{/km}^2$} and beamwidth $\omega$ of 150 degrees. Solid lines denote analytical results, markers denote simulations,  and the dashed line denotes the result for the PPP distribution of UAVs.
	}
	\label{fig:ClusterRadRate}
	\vspace{-3mm}
\end{figure}

In \Fig{ClusterRad} we demonstrate the performance of the UAV network for different values of the user hotspot radius $\rmax$. We can see that when the radius is the lowest, and therefore the users are the most concentrated, the performance of the network is best. This is due to the reduced distance between a typical user and the UAV above the hotspot center, which allows the user to associate to the hotspot center UAV more often and receive a better signal from it. For each hotspot radius there is a single UAV height which maximises the coverage probability: below this height the serving UAV is more likely to have NLOS to the reference user, which decreases the received signal strength, and above this height the interfering UAVs are more likely to have LOS on the user and more interfering UAV will cast their antenna beam over the user, increasing interference. We note that the UAV height which maximises the coverage probability increases as we increase the hotspot radius; this is due to the fact that the increasing distance between a typical user and its hotspot center UAV increases the probability of a LOS-blocking building being in the way, and therefore the UAV network must increase its height to compensate. The dashed line denotes the coverage probability for the case where $\rmax \to \infty$, which is equivalent to the performance of a UAV network that ignore user hotspot locations and are positioned independently. In \Fig{ClusterRadRate} we present the spectral efficiency of the UAV network for the same parameters. We can see that the spectral efficiency curves closely match the shape of those given in \Fig{ClusterRad}, including the approximate locations of the optimum UAV height for each correspoding hotspot radius. 

It is worth noting that the range of optimum UAV heights for the smaller hotspot radii in \Fig{ClusterRad} and \Fig{ClusterRadRate} is in the order of 25-50m above ground. This height may be too low for feasible UAV network operation in an urban environment, due to factors such as wireless backhaul availability \cite{Galkin_2018} or safety regulations \cite{FAA_2016}, \cite{EASA_2017}. A possible solution is to design UAV antennas with narrower beamwidths to allow the UAVs to operate at higher altitudes, as shown in \Fig{ClusterRadBW}. We can see that decreasing the antenna beamwidth will have the effect of increasing the height the UAV network would need to operate at to maximise the coverage probability. This result matches our previously reported result in \cite{Galkin_2017} for the case of an independently distributed UAV network.

\begin{figure}[t!]
\centering
	\includegraphics[width=.45\textwidth]{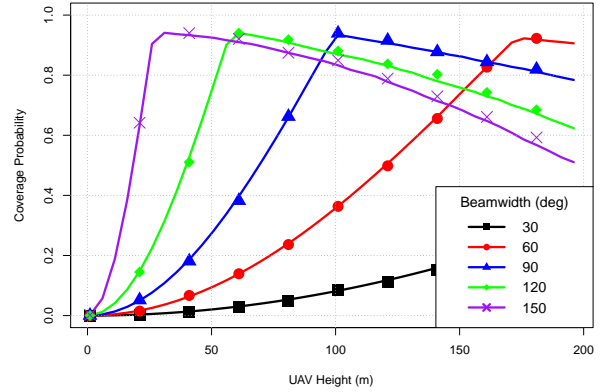}
    \vspace{-5mm}
	\caption{
	Coverage probability given a hotspot density of \unit[5]{$\text{/km}^2$} and hotspot radius of \unit[100]{m}.
	}
	\label{fig:ClusterRadBW}
	\vspace{-3mm}
\end{figure}

\begin{figure}[t!]
\centering
	\includegraphics[width=.45\textwidth]{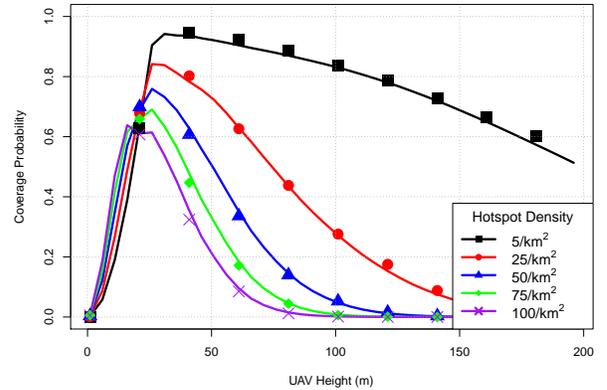}
    \vspace{-5mm}
	\caption{
	Coverage probability given a hotspot radius $\rmax$ of 100m. Solid lines denote analytical results, markers denote simulations.
	}
	\label{fig:ClusterDensity}
	\vspace{-3mm}
\end{figure}

In \Fig{ClusterDensity} we consider the network performance when the density of the user hotspots (and therefore the UAV network) is increased. We can clearly see that for greater hotspot densities the coverage probability deteriorates, due to the greater number of UAV interferers, which is not offset by the greater number of candidate serving UAVs for the user. It is also worth noting that the optimum UAV height appears to change very little for the different densities; following the results of the previous plots we conclude that the optimum height of a UAV network is primarily determined by the radius of the user hotspots, rather than the number of hotspots in a given area. This result is very different from the independent UAV placement case. \Fig{PPPDensity} presents the coverage probability for the case when the UAV network is placed independently of the hotspots; we can see that varying the UAV density will vary the optimum height, but the maximum achieveable coverage probability is approximately the same, irrespective of UAV density.

\begin{figure}[t!]
\centering
	\includegraphics[width=.45\textwidth]{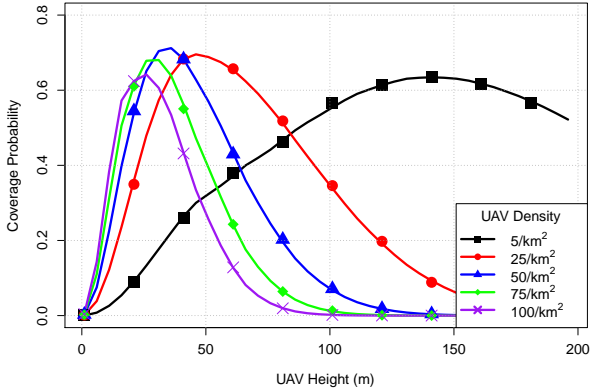}
    \vspace{-5mm}
	\caption{
	Coverage probability for the case when the UAVs are distributed indendently of the user hotspots. Solid lines denote analytical results, markers denote simulations.
	}
	\label{fig:PPPDensity}
	\vspace{-3mm}
\end{figure}

\subsection{UAV Placement Comparison}

\begin{figure}[t!]
\centering
	\includegraphics[width=.45\textwidth]{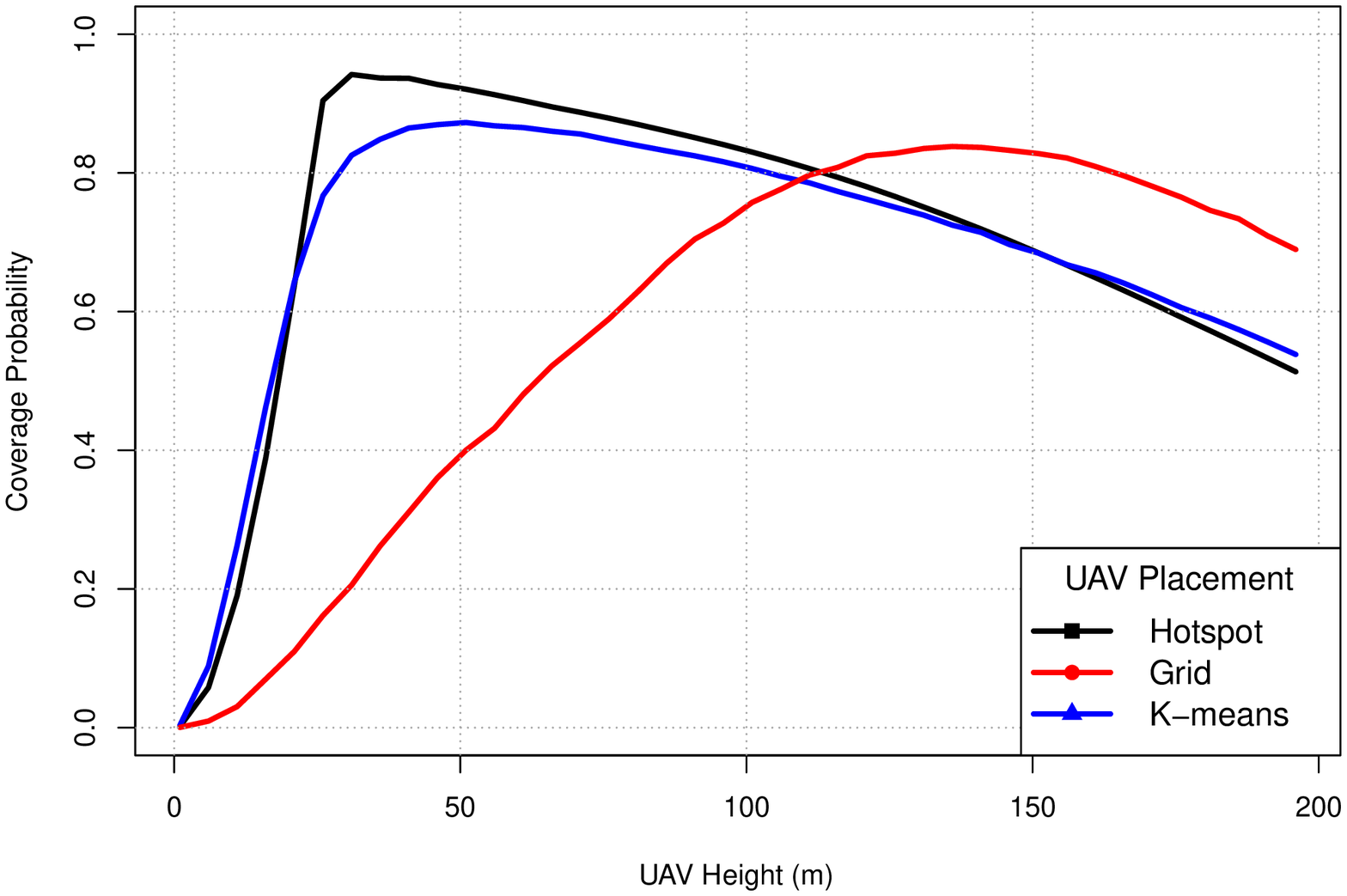}
    \vspace{-5mm}
	\caption{
	Coverage probability for different UAV placement strategies, given a hotspot radius $\rmax$ of 100m and a UAV density of $5/km^2$.
	}
	\label{fig:Placement5}
	\vspace{-3mm}
\end{figure}

\begin{figure}[t!]
\centering
	\includegraphics[width=.45\textwidth]{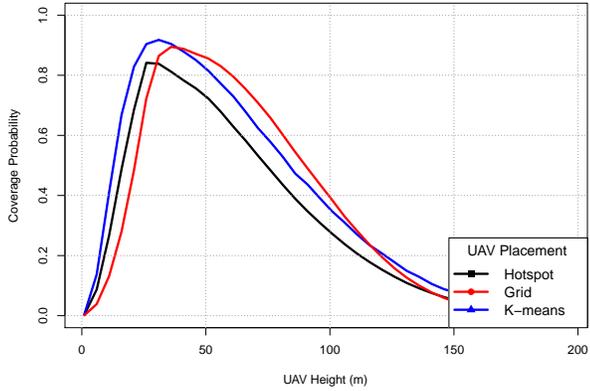}
    \vspace{-5mm}
	\caption{
		Coverage probability for different UAV placement strategies, given a hotspot radius $\rmax$ of 100m and a UAV density of $25/km^2$.
	}
	\label{fig:Placement25}
	\vspace{-3mm}
\end{figure}

In \Fig{Placement5} and \Fig{Placement25} we compare the performance of the UAV network when the UAVs are positioned at user hotspot centers, according to a rectangular grid which is independent of user locations and following K-means clustering placement. K-means clustering is a heuristic algorithm which partitions a given set of points in space into cells of approximately equal size and finds the centroid of the cells. In our previous work \cite{Galkin_2016} we have demonstrated how this algorithm can be used to optimally position UAVs around known user locations. For the low UAV density scenario in \Fig{Placement5} we can see that putting UAVs at hotspot centers gives the best overall performance, due to the shorter distance between users and their serving UAVs. The behaviour changes when we increase the UAV density in \Fig{Placement25}. Given a larger number of UAVs (and therefore interferers) the network benefits from putting UAVs in a grid pattern, even though this pattern does not take into account the actual locations of the users. At higher UAV densities the network benefits from a UAV placement strategy which maximises the distances between UAVs and which therefore limits the UAV coverage overlap and interference, even if this placement strategy does not necessarily reduce the distances between users and their serving UAV. The K-means algorithm is able to give the best performance as it both minimises the distance between the users and their serving UAV while also spreading the UAVs out across cells of roughly equal size, which separates out interferers. 

\begin{figure}[t!]
\centering
	\includegraphics[width=.45\textwidth]{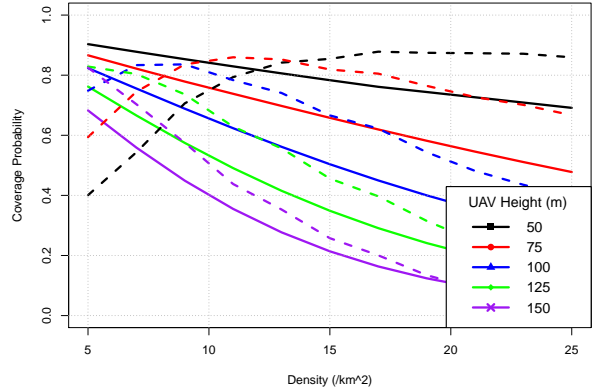}
    \vspace{-5mm}
	\caption{
	Coverage probability given a hotspot radius $\rmax$ of 100m. Solid lines denote results for UAVs placed at user hotspots, dashed lines denote UAVs arranged in a rectangular grid.
	}
	\label{fig:GridDensity}
	\vspace{-3mm}
\end{figure}

In \Fig{GridDensity} we further explore the effect of the UAV density on the coverage probability of the two UAV placement policies. We can see that there exists a maximum UAV density for a given UAV height, above which the grid deployment will outperform the hotspot center deployment, and that this maximum density decreases as we consider larger UAV heights. It is clear that at larger UAV densities the channel is interference-limited, and therefore the best network performance is achieved only when the UAVs intelligently position themselves with respect to each other, rather than based on the locations of the end users. As the UAV density is tied to the density of demand hotspots this result also demonstrates that there exists a maximum hotspot density above which the UAV network should prioritise minimising interference rather than maximising the received signal strength.

%\begin{figure}[t!]
 %centering
%	\includegraphics[width=.45\textwidth]{NakParameterN.eps}
%	\caption{
%	Coverage probability given a threshold of \unit[-10]{dB}, UAV beamwidth of $\pi/2$, pathloss exponent of 2.1, and UAV density of \unit[20]{$\text{/km}^2$}. 
%	\vspace{-5mm}
%	}
%	\label{fig:NakParameterN}
%\end{figure}

\section{Conclusion}
In this paper we have used stochastic geometry to model a UAV network serving user hotspots in an urban environment, considering UAV network parameters such as antenna beamwidth and height above ground, as well as environmental parameters such as user hotspot radius and building density. We derived an expression for the coverage probability of the UAV network as a function of these parameters and then verified the derivation numerically, while showing the trade-offs in performance that occur under different network conditions. We then compared this network performance against the case where the UAVs are positioned according to K-means, or according to a rectangular grid. Our results showed that positioning the UAVs above user hotspots is warranted when the density of hotspots (and therefore the density of required UAVs) is sufficiently low. For higher densities, the UAVs cause interference for one another, which cancels out the benefits of intelligent placement above users. 

\section*{Acknowledgements}
This material is based upon works supported by the Science Foundation
Ireland under Grants No. 10/IN.1/I3007 and 14/US/I3110.

\section*{Appendix A}

The Laplace transform $\Lc_{I_0}(s)$ is derived as 

\begin{align}
&\Lc_{I_{0}}(s) =  \Ed\Big[\exp\Big(-s I_0\Big)\Big] \nonumber \\
&=\Ed_{R_0,T_{0}}\Big[\Ed_{H_{T_{0}}}\left[\exp\Big(-H_{T_{0}} \eta_{R_0} l(R_0,\gamma,\alpha_{T_{0}})s\Big)\right]\Big]\nonumber\\
&=\Ed_{R_0,T_{0}}\Big[g(R_0,s,m_{T_{0}},\alpha_{T_{0}})\Big]\nonumber\\
&=\sum\limits_{j\in\{\los,\nlos\}}\int\limits_{0}^{\rmax}g(r_0,s,m_{j},\alpha_{j}) f_{R_{0j}}(r_0|\bar{S}_{0}< \bar{S}_\star) \dr r_0 \nonumber\\
\label{eq:LaplaceI0}
\end{align}

\noindent
where 
\begin{equation}
g(r_0,s,m_{j},\alpha_{j}) = \left(\frac{m_{j}}{\eta_{r_0} s(r_0^2+\gamma^2)^{-\alpha_{j}/2} + m_{j}}\right)^{m_{j}}   \nonumber,
\end{equation}

\noindent
which comes from $H_{T_{0}}$ being gamma distributed, with $f_{R_{0j}}(r_0|\bar{S}_{0}<\bar{S}_\star)$ denoting the joint pdf of the hotspot center UAV's distance and channel type, given that the hotspot UAV is positioned such that its signal is weaker than the user's serving UAV signal. This is derived as

\begin{align}
& f_{R_{0j}}(r_0|\bar{S}_{0}<\bar{S}_\star) = \Pd(R_0=r_0,T_{0}=j|\bar{S}_{0}< \bar{S}_\star) \nonumber \\
& = \frac{\Pd(R_0=r_0,T_{{0}}=j,\bar{S}_{0}< \bar{S}_\star)}{\Pd(\bar{S}_{0}< \bar{S}_\star)} \nonumber \\
& \overset{(a)}{=} \frac{\Pd(\bar{S}_{0}< \bar{S}_\star|R_0=r_0,T_{{0}}=j)\Pd(T_{{0}}=j|R_0=r_0)\Pd(R_0=r_0)}{\Bc_{\texttt{0}}} \nonumber \\
& \overset{(b)}{=} \frac{\textbf{1}(c_{j}(t_\star)\leq r_0 \leq \rmax)\Pd_{j}(r_0)f_{R_{0}}(r_0)}{\Bc_{0}}
\end{align}

\noindent
 where $(a)$ follows from replacing $\Pd(S_{0}< S_\star)$ with $\Bc_{0}$ which is the probability of the hotspot center UAV providing a weaker signal than the serving UAV given in \Eq{Bc}, $(b)$ follows from $\Pd(R_0=r_0) = f_{R_0}(r_0)$, $\Pd(T_{{0}}=j|R_0=r_0) = \Pd_{j}(r_0)$ and $\Pd(S_{0}< S_\star|R_0=r_0,T_{{0}}=j) = \textbf{1}(c_{j}(t_\star)\leq r_0 \leq \rmax)$ where $\textbf{1}(.)$ is the indicator function. Inserting this expression into $\Eq{LaplaceI0}$ we can write the integral as 
 
   \begin{align}
   &\frac{2}{(\rmax^2)\Bc_{0}}\sum\limits_{j\in\{\los,\nlos\}}\int\limits_{c_j(t_\star)}^{\rmax}g(r_0,s,m_{j},\alpha_{j}) \Pd_{j}(r_0) r_0 \dr r_0 \nonumber\\
   \label{eq:Li1}
   \end{align}
   
   Recall that $\eta_{r_0}=0$ for values of $r_0 > \Rcone$, which will reduce $g(r_0,s,m_{j},\alpha_{j})$ to 1. If $\rmax > \Rcone$ then the above integral is separated into two sub-integrals as

 \begin{align}
 &\frac{2}{(\rmax^2)\Bc_{0}}\sum\limits_{j\in\{\los,\nlos\}}\Bigg(\int\limits_{c_j(t_\star)}^{\Rcone}\hspace{-2mm}g(r_0,s,m_{j},\alpha_{j}) \Pd_{j}(r_0) r_0 \dr r_0 \nonumber \\
 &+ \int\limits_{\Rcone}^{\rmax}\Pd_{j}(r_0)r_0 \dr r_0\Bigg)
 \label{eq:Li2}
 \end{align}
 
 The integrals in \Eq{Li1} and \Eq{Li2} assume that $c_{j}(t_\star)<\min(\rmax,\Rcone)$. For the case where $\Rcone\leq c_{j}(t_\star)\leq \rmax$ the first sub-integral in \Eq{Li2} reduces to 0 and the second sub-integral takes $c_{j}(t_\star)$ as its lower integration bound, if $c_{j}(t_\star)\geq \max(\Rcone,\rmax)$ then both integrals reduce to 0. From the definition \Eq{LOS} the LOS probability is a step function, therefore the integral $\int\limits_{c_j(t_\star)}^{\Rcone}\hspace{-2mm}g(r_0,s,m_{j},\alpha_{j}) \Pd_{j}(r_0) r_0 \dr r_0$ can be written as a sum of weighted integrals 
 
 \begin{align}
 \sum\limits_{q=\floor*{c_j(t_\star)\sqrt{\beta\delta}}}^{\floor*{\Rcone\sqrt{\beta\delta}}} \Pd_{j}(l)\int\limits_{\Lower}^{\Upper}g(r_0,s,m_{j},\alpha_{j}) r_0 \dr r_0
\label{eq:laplaceSum}
 \end{align}
 
 \noindent
 where $l = \max(c_j(t_\star),q/\sqrt{\beta\delta})$ and $u = \min(\Rcone,(q+1)/\sqrt{\beta\delta})$. Using a derivation process similar to (11) in \cite{Galkin_2017} the integral $\int\limits_{\Lower}^{\Upper}g(r_0,s,m_{j},\alpha_{j}) r_0 \dr r_0$ can be expressed in analytical form, 

\begin{align}
&\int\limits_{\Lower}^{\Upper}\left(\frac{m_{j}}{\eta_{r_0} s(r_0^2+\gamma^2)^{-\alpha_{j}/2} + m_{j}}\right)^{m_{j}} r_0 \dr r_0 \nonumber \\
&\overset{(a)}{=} \int\limits_{(\Lower^2+\gamma^2)^{1/2}}^{(\Upper^2+\gamma^2)^{1/2}}\left(\frac{m_{j}}{\eta_{r_0}sy^{-\alpha_{j}} + m_{j}}\right)^{m_{j}} y\dr y \nonumber \\
& \overset{(b)}{=} \frac{ 1}{\alpha_{j}}\int\limits_{(\Lower^2+\gamma^2)^{\alpha_{j}/2}}^{(\Upper^2+\gamma^2)^{\alpha_{j}/2}}\left(1-\frac{1}{1 + zm_{j}(\eta_{r_0}s)^{-1}}\right)^{m_{j}}z^{2/\alpha_{j} - 1}\dr z \nonumber \\
& \overset{(c)}{=} \frac{1}{\alpha_{j}}\Bigg(\int\limits_{(\Lower^2+\gamma^2)^{\alpha_{j}/2}}^{(\Upper^2+\gamma^2)^{\alpha_{j}/2}} z^{2/\alpha_{j} - 1} \dr z \nonumber \\
 &+\sum\limits_{k=1}^{m_{j}}\binom{m_{j}}{k}(-1)^{k}\int\limits_{(\Lower^2+\gamma^2)^{\alpha_{j}/2}}^{(\Upper^2+\gamma^2)^{\alpha_{j}/2}}\frac{z^{2/\alpha_{j} - 1}}{(1+zm_{j}(\eta_{r_0}s)^{-1})^k}\dr z \Bigg) ,\nonumber \\
&\overset{(d)}{=}\frac{1}{2}\Bigg((\Upper^2-\Lower^2)+\sum\limits_{k=1}^{m_{j}}\binom{m_{j}}{k}(-1)^{k}\nonumber \\
&\cdot\bigg((\Upper^2+\gamma^2)\mbox{$_2$F$_1$}\Big(k,\frac{2}{\alpha_{j}};1+\frac{2}{\alpha_{j}};  -\frac{m_{j}(\Upper^2+\gamma^2)^{\alpha_{j}/2}}{\eta_{r_0}s}\Big) \nonumber\\
&-(\Lower^2+\gamma^2)\mbox{$_2$F$_1$}\Big(k,\frac{2}{\alpha_{j}};1+\frac{2}{\alpha_{j}};-\frac{m_{j}(\Lower^2+\gamma^2)^{\alpha_{j}/2}}{\eta_{r_0}s}\Big)\bigg)\Bigg)\Bigg) 
\label{eq:laplaceFinalvcproof}
\end{align}

\noindent
where $(a)$ stems from the substitution $y=(r_0^2+\gamma^2)^{1/2}$, $(b)$ from the substitution $z = y^{a_{j}}$, $(c)$ from applying binomial expansion and $(d)$ from using \cite{Ryzhik_2007}[Eq. 3.194.1], The solution above is inserted into \Eq{laplaceSum} which is denoted as $\Cc_j(s)$ and inserted into \Eq{laplaceFinal}.

\section*{Appendix B}

In this appendix we present an analytical expression for a higher order derivative of $\Lc_{I_0}(s)$ for the case when $I_0$ is not 0. The $i$th derivative is given as 

\begin{align}
 &\frac{d^{i}\Lc_{I_0}(s)}{ds^{i}} = \frac{1}{\rmax^2 \Bc_{\texttt{0}}} \sum_{j\in\{\los,\nlos\}} \nonumber \\
 &\cdot\Bigg(\sum\limits_{q=\floor*{c_j(t_\star)\sqrt{\beta\delta}}}^{\floor*{\min(\rmax,\Rcone)\sqrt{\beta\delta}}} \Pd_{j}(l) \sum\limits_{k=1}^{m_{j}}\binom{m_j}{k}(-1)^{k}\nonumber \\
 &\cdot\bigg(\frac{d^{i} f\left(k,j,(\Upper^2+\gamma^2),s\right)}{ds^{i}} - \frac{d^{i} f\left(k,j,(\Lower^2+\gamma^2),s\right)}{ds^{i}}\bigg)\Bigg) %\nonumber \\
 %&\cdot\bigg((\Upper^2+\gamma^2)\mbox{$_2$F$_1$}\Big(k,\frac{2}{\alpha_L};1+\frac{2}{\alpha_L};  -\frac{m_L(\Upper^2+\gamma^2)^{\alpha_L/2}}{s_L}\Big) \nonumber\\
% &-(\Lower^2+\gamma^2)\mbox{$_2$F$_1$}\Big(k,\frac{2}{\alpha_L};1+\frac{2}{\alpha_L};-\frac{m_L(\Lower^2+\gamma^2)^{\alpha_L/2}}{s_L}\Big)\bigg)
\label{eq:comp2}
\end{align} 

\noindent
where

\begin{equation}
f\left(k,j,b,s\right) =  b \mbox{$_2$F$_1$}\Big(k,\frac{2}{\alpha_j};1+\frac{2}{\alpha_j};  z(j,b,s)\Big)
\end{equation}

\noindent
and 

\begin{equation}
z(j,b,s) = -\frac{m_j b^{\alpha_j/2}}{\eta_{r_0}s} .
\label{eq:comp3}
\end{equation}

The $i$th derivative of $f\left(k,j,b,s\right)$  with respect to $s$ can be obtained using \cite{Ryzhik_2007}[0.430.1]:

\begin{align}
&\frac{d^{i} f\left(k,j,b,s\right)}{ds^{i}} = \sum\limits_{p=1}^{i} \frac{U_p}{p!} \frac{d^{p} f\left(k,j,b,s\right)}{dz^p}
\end{align}

\noindent
where 

\begin{equation}
U_p = \sum\limits_{a=0}^{p-1}(-1)^a \binom{p}{a}z(j,b,s)^{a}\frac{d^i z(j,b,s)^{p-a}}{ds^i}
\end{equation}

\begin{align}
 &\frac{d^{p} f\left(k,j,b,s\right)}{dz^p} = \nonumber \\ 
 &b \frac{k_{(p)}(2/\alpha_j)_{(p)}}{(1+2/\alpha_j)_{(p)}} \bigg(\mbox{$_2$F$_1$}\Big(k+p,\frac{2}{\alpha_j}+p;1+\frac{2}{\alpha_j}+p; z(j,b,s) \Big)\bigg)
\end{align}

\noindent
and 

\begin{align}
&\frac{d^i z(j,b,s)^{p-a}}{ds^i} = \nonumber \\
&\sum\limits_{e=0}^{i}\sum\limits_{n=0}^{e} (-1)^n(-m_j b^{\alpha_j/2}/\eta_{r_0})^e (-m_j b^{\alpha_j/2}/(\eta_{r_0}s))^{(p-a-e)} \nonumber \\ 
&\cdot\frac{s^{(-i-e)}(1+p-a-e)_{(e)}(1+n-i-e)_{(i)}}{n!(e-n)!}
\end{align}

\noindent
where $(.)_{(a)}$ is the Pochhammer notation for the rising factorial.

\ifCLASSOPTIONcaptionsoff
  \newpage
\fi

% trigger a \newpage just before the given reference
% number - used to balance the columns on the last page
% adjust value as needed - may need to be readjusted if
% the document is modified later
%\IEEEtriggeratref{8}
% The "triggered" command can be changed if desired:
%\IEEEtriggercmd{\enlargethispage{-5in}}

% references section

% can use a bibliography generated by BibTeX as a .bbl file
% BibTeX documentation can be easily obtained at:
% http://www.ctan.org/tex-archive/biblio/bibtex/contrib/doc/
% The IEEEtran BibTeX style support page is at:
% http://www.michaelshell.org/tex/ieeetran/bibtex/
\bibliographystyle{./bib/IEEEtran}
% argument is your BibTeX string definitions and bibliography database(s)
\bibliography{./bib/IEEEabrv,./bib/IEEEfull}

\begin{IEEEbiography}[{\includegraphics[width=1in,height=1.25in,clip]{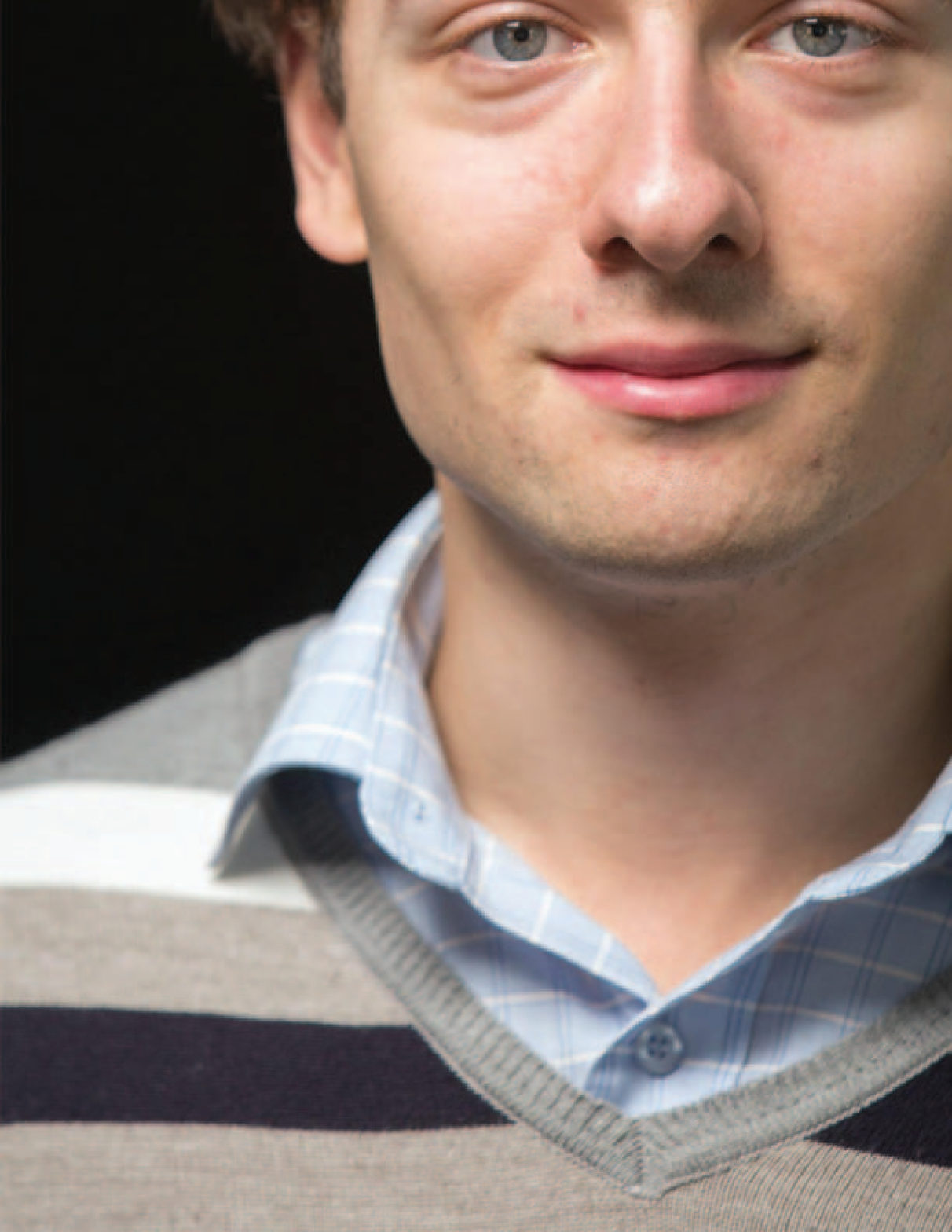}}]{Boris Galkin}
was awarded a BAI and MAI in computer \& electronic engineering by Trinity College Dublin in 2014 and since then has been working toward the Ph.D. degree at CONNECT, Trinity College, The University of Dublin, Ireland. His research interests include small cell networks, unmanned airborne devices and cognitive radios.
\end{IEEEbiography}

\begin{IEEEbiography}[{\includegraphics[width=1in,height=1.25in,clip]{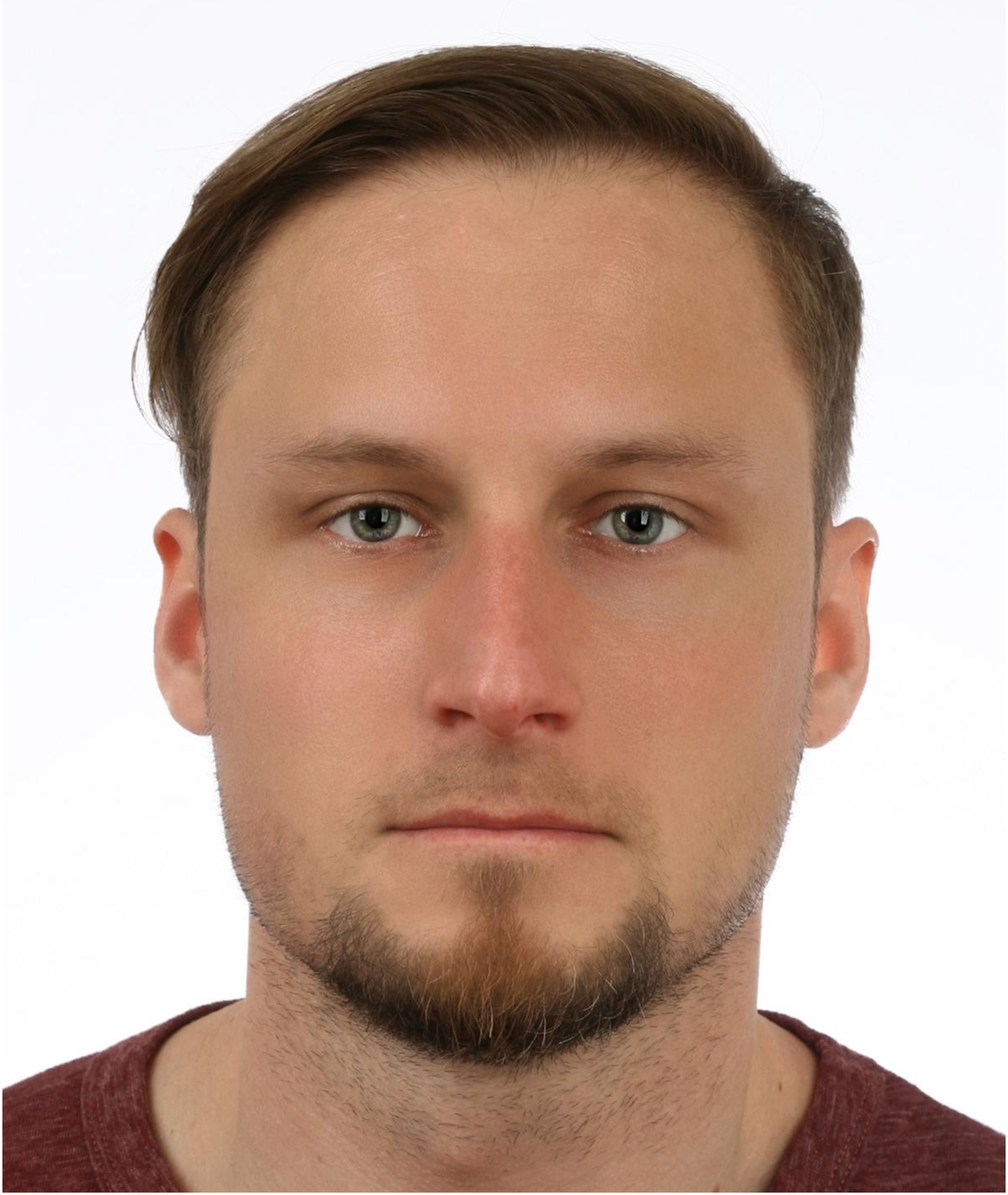}}]{Jacek Kibi\l{}da}
received the M.Sc. degree from Pozna\'n University of Technology, Poland, in 2008, and the Ph.D. degree from Trinity College, The University of Dublin, Ireland, in 2016. Currently he is a research fellow with CONNECT, Trinity College, The University of Dublin, Ireland. His research focuses on architectures and models for future mobile networks.
\end{IEEEbiography}

\begin{IEEEbiography}[{\includegraphics[width=1in,height=1.25in,clip]{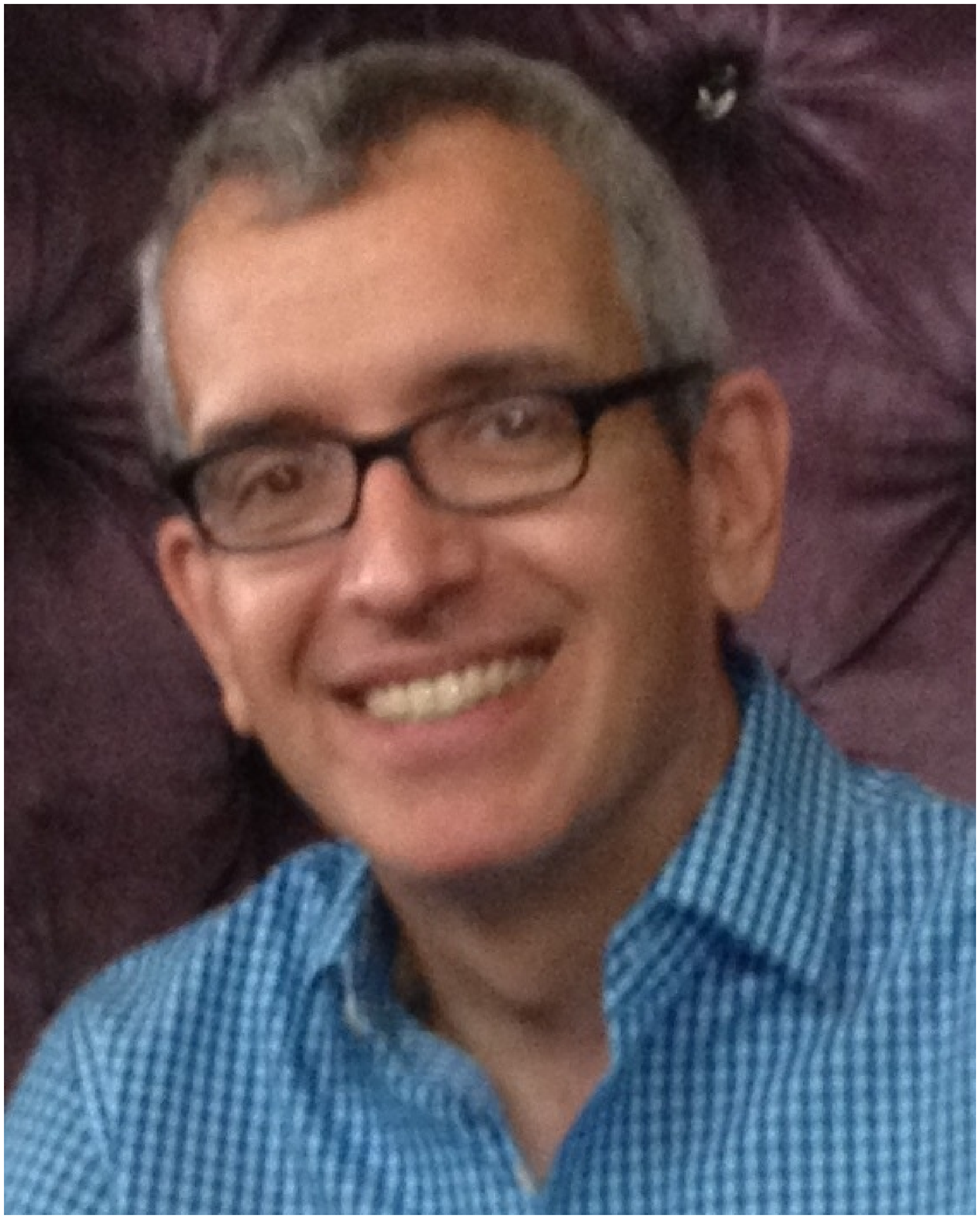}}]{Luiz A. DaSilva}
%\begin{IEEEbiographynophoto}{Luiz A. DaSilva}
holds the chair of Telecommunications at Trinity College Dublin, where he is the director of CONNECT, telecommunications and networks research centre funded by the Science Foundation Ireland. Prior to joining Trinity College, Prof. DaSilva was with the Bradley Department of Electrical and Computer Engineering at Virginia Tech for 16 years. Prof DaSilva is a principal investigator on research projects funded by the National Science Foundation, the Science Foundation Ireland, and the European Commission. Prof DaSilva is a Fellow of Trinity College Dublin, an IEEE Communications Society Distinguished Lecturer and a Fellow of the IEEE.
\end{IEEEbiography}
\end{document}